\documentclass[journal]{IEEEtran}
\IEEEoverridecommandlockouts
\usepackage[utf8]{inputenc}
\usepackage{colortbl} 
\usepackage[T1]{fontenc}
\usepackage{textcomp}
\usepackage{fixltx2e}
\usepackage{microtype}
\usepackage{calc}
\usepackage[normalem]{ulem}
\usepackage{makecell}
\usepackage{diagbox}
\usepackage{pifont}
\usepackage{gensymb}
\usepackage{comment}
\usepackage{bbm}
\usepackage{amsthm}
\newtheorem{Theorem}{Theorem}
\newtheorem{Proposition}{Proposition}

\usepackage{amsmath,amssymb,amsfonts}
\usepackage{algorithmic}
\usepackage{multirow}
\usepackage{textcomp}
\usepackage{xcolor}\usepackage[dvipsnames]{xcolor}
\usepackage{url}
\usepackage{ tipa }
\usepackage[range-phrase=--,per-mode=symbol-or-fraction,binary-units=true,range-units=single,list-units=single,detect-all]{siunitx}
\usepackage[capitalize,noabbrev]{cleveref}
\crefformat{footnote}{#2\footnotemark[#1]#3}

\usepackage{booktabs}
\usepackage{url}
\RequirePackage{xstring}
\RequirePackage{xparse}
\RequirePackage[]{acro}
\NewDocumentCommand\acrodef{mO{#1}mG{}}{\DeclareAcronym{#1}{short={#2}, long={#3}, #4}}
\usepackage[ruled,linesnumbered]{algorithm2e}
\usepackage{siunitx}

\DeclareUnicodeCharacter{0301}{\'{e}}
\acrodef{mmWave}{millimeter-wave}
\acrodef{ISAC}{integrated sensing and communications}
\acrodef{PMCW}{phase-modulated continuous waveform}
\acrodef{CO-PMCW}{code-orthogonal PMCW}
\acrodef{CSO-PMCW}{code-space-orthogonal PMCW}
\acrodef{FMCW}{frequency-modulated continuous waveform}
\acrodef{OFDM}{orthogonal frequency-division multiplexing}
\acrodef{MIMO}{multiple-input multiple-output}
\acrodef{V2X}{vehicle-to-everything}
\acrodef{PAPR}{peak-to-average power ratio}
\acrodef{SINR}{signal-to-interference-plus-noise ratio}
\acrodef{RDM}{range-Doppler map}
\acrodef{SISO}{single-input single-output}
\acrodef{CRLB}{Cramer-Rao lower bound}
\acrodef{ULA}{uniform linear array}
\acrodef{UL}{uplink}
\acrodef{UE}{user equipment}
\acrodef{PRNS}{pseudo-random noise sequence}
\acrodef{AoD}{angle of departure}
\acrodef{AoA}{angle of arrival}
\acrodef{AWGN}{additive white Gaussian noise}
\acrodef{RCS}{radar cross section}
\acrodef{RSU}{roadside unit}
\acrodef{LoS}{line-of-sight}
\acrodef{NLoS}{non-line-of-sight}
\acrodef{FFT}{fast Fourier transform}
\acrodef{IFFT}{inverse fast Fourier transform}
\acrodef{PRI}{pulse repetition interval}
\acrodef{i.i.d.}{independent and identically distributed}
\acrodef{MU-MIMO}{multi-user MIMO}
\acrodef{FIM}{Fisher information matrix}
\acrodef{IIoT}{intelligent internet of things}
\acrodef{CU}{central unit}
\acrodef{LTE}{long-term evolution}
\acrodef{NR}{new radio}
\acrodef{C-RAN}{cloud radio access network}
\acrodef{RRA}{radio resource allocation}
\acrodef{MSE}{mean squared error}
\acrodef{VA}{virtual array}
\acrodef{EFIM}{equivalent Fisher information matrix}
\acrodef{TDM}{time-division multiplexing}
\acrodef{RMS}{root mean square}
\acrodef{FDM}{frequency-division multiplexing}
\acrodef{CDM}{code-division multiplexing}
\acrodef{OTFS}{orthogonal time frequency space}
\acrodef{PEB}{position error bound}
\acrodef{VEB}{velocity error bound}
\acrodef{MMS}{multi-monostatic sensing}
\acrodef{MBS}{multi-bistatic sensing}
\acrodef{MXS}{multi-X-static hybrid sensing}
\acrodef{UAV}{unmanned aerial vehicle}
\acrodef{BS}{base station}
\acrodef{SNR}{signal-to-noise ratio}
\acrodef{eMBB}{enhanced mobile broadband}
\acrodef{MUSIC}{multiple signal classification}
\acrodef{CS}{compressive sensing}
\acrodef{FE}{fine estimation}
\acrodef{Tx}{transmitter}
\acrodef{Rx}{receiver}
\acrodef{TRx}{transceiver}
\acrodef{DL}{downlink}
\acrodef{UP}{uplink}
\acrodef{DoF}{degree of freedom}
\acrodef{ESNR}{energy SNR}
\acrodef{BPSK}{binary phase shift keying}
\acrodef{ICI}{inter-carrier interference}
\acrodef{CP}{cyclic prefix}
\acrodef{ISFFT}{inverse symplectic finite Fourier transform}
\acrodef{SFFT}{symplectic finite Fourier transform}
\acrodef{DD}{delay-Doppler}
\acrodef{RU}{radio unit}
\acrodef{SCA}{successive convex approximation}
\acrodef{SDP}{semidefinite programming}
\acrodef{PSD}{positive semidefinite}
\acrodef{ISD}{inter-side distance}
\acrodef{MRT}{maximum ratio transmission}
\acrodef{NMAE}{normalized mean absolute error}

\usepackage[caption=false,font=footnotesize]{subfig}
\usepackage[pdftex]{graphicx}
\DeclareGraphicsExtensions{.pdf,.png,.jpg,.tikz}
\usepackage{csquotes}
\usepackage[backend=biber,style=ieee,doi=false,isbn=false,sorting=none,sortcites=true,mincitenames=1,maxcitenames=2]{biblatex}
\DeclareFieldFormat{sentencecase}{#1} 
\DeclareFieldFormat{titlecase}{#1} 
\usepackage{xpatch}
\xpatchbibmacro{textcite}{\addspace}{\addnbspace}{}{}
\xpatchbibmacro{Textcite}{\addspace}{\addnbspace}{}{}
\DefineBibliographyStrings{english}{
	andothers = et~al\adddot\addspace
}

\graphicspath{{./Bilder/}}    
\begin{document}

\title{Fundamental Sensing Limits of 6G Cooperative MIMO-ISAC Networks: Joint Position–Velocity CRLB and Decoupling Analysis}

\author{%
Yanpeng Su,~\IEEEmembership{Graduate Student Member,~IEEE}, Norman Franchi,~\IEEEmembership{Member,~IEEE},\\and Maximilian Lübke,~\IEEEmembership{Member,~IEEE}
\thanks{This work was supported by BMFTR, Germany, under Open6GHub+ (16KIS2404). The work contributes to the research within the 6G-Valley innovation cluster. \textit{(Corresponding author: Yanpeng Su.)}}
\thanks{Yanpeng Su, Norman Franchi, and Maximilian Lübke are with the Institute for Smart Electronics and Systems, Friedrich-Alexander-Universität Erlangen-Nürnberg, Erlangen 91058, Germany (email: yanpeng.su@fau.de; norman.franchi@fau.de; maximilian.luebke@fau.de).}
\thanks{Color versions of one or more of the figures in this article are available online at http://ieeexplore.ieee.org.}
}
\markboth{IEEE tbd}{Su \textit{et al.}: Fundamental Sensing Limits of 6G Cooperative MIMO-ISAC Networks: Joint Position–Velocity CRLB and Decoupling Analysis}

\maketitle

\begin{abstract}
This paper presents a Cramér–Rao lower bound (CRLB)-based performance bound analysis of cooperative multiple-input multiple-output (MIMO) integrated sensing and communications (ISAC) networks. 
We first show the CRLB transformation of the signal-level parameters to the state parameters (position and velocity) in cooperative ISAC networks.
Unlike existing studies that primarily ignored coupling between position and velocity in the Fisher information matrix (FIM), we derive the full FIM and the corresponding exact CRLB.
Particularly, the results of multi-monostatic sensing, multi-bistatic sensing, and their hybrid are discussed. 
Addressing the complexity and tractability, we simplify the FIM and CRLB by excluding the coupling terms between the position and velocity, and provide a criterion for determining whether the simplification is valid.
The simplified CRLB benefits from low computational complexity and provides a tractable and reliable performance metric for optimization problems such as resource allocation and beamforming. 
Finally, the position and velocity CRLBs and the simplification-induced error are examined in the simulation.
The results demonstrate that the simplified CRLB can be applied in general cases. Based on the simulation results, the impact of resource and geometric parameters on position and velocity error bounds, and the validity of the simplified CRLBs is explained through the corresponding CRLB expressions.
\end{abstract}

\begin{IEEEkeywords}
\text{ }6G, cooperative sensing, Cramér–Rao lower bound (CRLB), Fisher information matrix (FIM), integrated sensing and communications (ISAC), multiple-input multiple-output (MIMO), networked ISAC
\end{IEEEkeywords}
\vspace{-1em}

\section{Introduction}
\Ac{ISAC} is regarded as a key technology of the upcoming 6G mobile networks since it allows the networks to be aware of the environments without significantly increasing resource conflicts, energy consumption, and hardware costs \cite{9924202,9737357}, accelerating the applications of intelligent transportation systems, smart home and factory, and \ac{IIoT} \cite{10536135,persp}. Although the 4G \ac{LTE} and 5G \ac{NR} networks already provide localization services for active devices like smartphones \cite{9354629}, the above-mentioned applications require future 6G mobile networks to be able to perceive both active and passive objects \cite{9591277}. 

Since the ages of 4G and 5G, cooperative transmission and reception techniques have been widely deployed to suppress inter-cell interference at cell edges and improve channel capacity \cite{9272226,3gpp.36.819}, while the \ac{C-RAN} architecture enables efficient coordination and information sharing across distributed nodes, facilitating cooperative operation in multi-node networks \cite{9762865}. At the age of 6G, cooperative sensing is expected to be integrated into the existing communication network, providing higher spatial diversity and resolution and enabling 2D/3D velocity estimation \cite{9585321}. Therefore, the cooperative ISAC networks have attracted significant research attention in recent years \cite{10599241,9411178}. 

Performance bound analysis is one of the cornerstones for the deployment of cooperative ISAC in mobile networks, since it guides the parameter configuration and geometrical deployment of the network while providing fundamental metrics for system optimization designs, such as beamforming \cite{10577579} and \ac{RRA} \cite{9945983}. 
In this context, \ac{CRLB} is the most commonly addressed sensing metric.
CRLB calculates the theoretical lower bound on the \ac{MSE} of any unbiased estimator \cite{1420803}, thereby can be 
used to evaluate the best achievable accuracy of the signal-level parameters, including amplitude, phase, range (delay), radial velocity (Doppler), and \ac{AoA}, and the state parameters, including position and velocity. 

An accurate CRLB analysis requires joint analysis of all unknown parameters since the coupling between the parameters in the \ac{FIM} may influence the result \cite{923295}. 
It is demonstrated in our previous work \cite{11570946} that for common ISAC waveforms, the signal-level parameters of interest (range, radial velocity, and AoA) are completely or approximately decoupled from each other in the \ac{EFIM}. 
Building on our previous results, this work extends the system to cooperative ISAC networks and provides a transformation from signal-level parameters to state parameters, where the coupling between the position and velocity EFIMs is addressed.
\vspace{-1em}

\subsection{Related Works}
The cooperative ISAC networks put forward requirements for the performance analysis of multi-node sensing, where the global FIM is calculated by the sum of the local ones corresponding to the individual nodes. 
Studies in \cite{7131233,10615966,10646234,10419729,10693955,11202824,10266619} focus on the calculation of \ac{PEB}, while its coupling with \ac{VEB} is usually not considered. The study in \cite{7131233} discusses the PEB of a single-antenna radar network under the assumption of static targets, where the PEB is directly determined by the range CRLB. 
In contrast, \cite{10615966} provides a CRLB of narrowband PEB purely based on the angle information. 
The study in \cite{10646234} introduces a concept of error ellipsoid that expresses the error in different directions, and the PEB in realistic scenarios is evaluated.
The PEB of \ac{MIMO} \ac{OFDM} radar network is investigated in \cite{10419729,10693955}, focusing on the \ac{MBS} and \ac{MMS} systems, respectively. 
The work in \cite{11202824} focuses on the PEB of \ac{MBS} OFDM radar systems and particularly investigates the impact of network geometry, highlighting the importance of uniform node distribution. The study in \cite{10266619} focuses on 1-D positioning in a single direction, where the road is abstracted as a line, and the location of the vehicles is estimated. On the contrary to the above-listed works, the VEB of OFDM radar with widely separated antennas is addressed in \cite{5393291}, where the coupling with PEB is also neglected. 

Studies on joint position and velocity analysis remain limited and have attracted increasing attention in recent years. 
The study in \cite{7362229} implemented a joint position and velocity CRLB analysis for radar networks with single-antenna nodes and provided a full expression for PEB and VEB, taking the impact of coupling into account while omitting angle information. Recent work in \cite{11481148} also analyzes PEB and VEB jointly in a single-antenna radar network, and the results show that the impact of varying \ac{RCS} on PEB and VEB is negligible. 
Studies in OFDM radar networks can be found in \cite{11231051,pucci2025,xu2026}. The work in \cite{11231051} focuses on the CRLBs of position and absolute velocity of \ac{MMS} in the detection of \acp{UAV}, under the assumption that the moving direction is known, whereas the PEB is calculated without considering the coupling with velocity. The PEB and VEB of MMS, MBS, and their hybrid are addressed in \cite{pucci2025}, whereas the PEB is calculated in the same manner as \cite{11231051}, and the VEB is obtained by summing the local velocity EFIMs. One problem is that, unlike FIM, EFIM does not obey the law of accumulation, and the produced error in CRLB is not evaluated. The work in \cite{xu2026} models the cooperation between a single \ac{BS} and a single \ac{UE}, forming a single monostatic plus single bistatic sensing system, while the position-velocity coupling is not considered.
\vspace{-1em}


\subsection{Contributions}

Related works exhibit limitations in handling multi-antenna nodes or in capturing the coupling between position and velocity.
As future 6G mobile networks and ISAC systems will inherently adopt massive \ac{MIMO} technology \cite{8869705,9737357}, studies on the CRLB of networked ISAC should account for multi-antenna nodes, where the antenna array introduces AoA-related components in the FIM. Moreover, the full and rigorous CRLB characterizing the coupling between position and velocity is needed, since it provides exact guidance for system design and node distribution.
In contrast, optimization problems generally desire a tractable and simple CRLB metric to avoid high complexity, while maintaining a small deviation from the exact CRLB. Therefore, it is necessary to implement a theoretically justified simplification based on the full CRLB and evaluate the caused deviation.

This paper addresses these gaps by providing a comprehensive CRLB analysis for state parameters. We consider the \ac{DL} sensing between the network nodes in mobile networks, where the \ac{eMBB} and massive MIMO technologies result in multi-antenna radar networks with wideband signals, providing a much higher spatial resolution and transmission power than \ac{UL} sensing while avoiding synchronization and privacy issues \cite{9585321}. In addition, considering the existence of various waveform candidates for sensing \cite{9627227}, such as \ac{FMCW}, \ac{PMCW}, \ac{OFDM}, and \ac{OTFS}, the analysis in this work is based on the generic signal model within the unified framework \cite{11570946}. Therefore, the conclusion can be generalized to different waveforms by substituting their characteristics.
The main contributions of this paper are summarized as follows:
\begin{itemize}
    \item \textbf{Joint PEB and VEB analysis for networked ISAC:} Based on the FIM of signal-level parameters, we provide a transformation to the FIM of state parameters and derive the full PEB and VEB, taking the coupling between them and MIMO node structure into account. The influence of the signal-level parameters and system geometric characteristics on the PEB and VEB is explained.
    \item \textbf{Simplified CRLBs and validity criterion:} The simplified CRLBs are derived by excluding the coupling terms between position and velocity in the FIM. We prove that in a network that can provide sufficient spatial diversity, for each sensing link, if the target's normalized range is much larger than the normalized velocity, the gap between the full and the simplified CRLBs can be neglected. The practicality of this criterion under practical 5G system configurations and scenarios is described. This criterion provides a practical guideline for determining whether the simplified CRLB can be used reliably in system designs.
    \item \textbf{Various sensing types:} The analysis of full and simplified PEB and VEB is implemented for different sensing types: \ac{MMS}, \ac{MBS}, and their hybrid, named \ac{MXS} in our previous work \cite{10599241}. We show the consistency in their CRLB expressions.
    \item \textbf{Verification in Simulation:} The coverage and accuracy of different sensing types, as well as the gap between the full and simplified CRLBs, are investigated geometrically via the CRLB heatmap. Particularly, the feasibility and tightness of the proposed criterion are evaluated. It is demonstrated that the simplified CRLB can be safely used for the area inside the polygon formed by the nodes, which matches the prediction of the criterion. Based on the results, the impact of resource and geometric parameters is described with reference to the CRLB formulas.
\end{itemize}

The rest of this paper is organized as follows: Section~\ref{sec.model} introduces the networked ISAC system model. The full CRLB is provided in Section~\ref{sec.full}. The simplified CRLB and the proposed criterion are presented in Section~\ref{sec.simp}. The simulation results are given in Section~\ref{sec.sim}. Section~\ref{sec.dis} further discusses the results in combination with the theoretical analysis. Section~\ref{sec.con} concludes this work.

\textit{Notations:}  $(\cdot)^{*}$, $(\cdot)^{T}$, $(\cdot)^{H}$, $(\cdot)^{-1}$, and $(\cdot)^{\dagger}$ denote the conjugate, transpose, conjugate transpose, inverse, and Moore–Penrose pseudoinverse, respectively. $a$ and $A$ represent scalars, $\mathbf{a}$ and $\mathbf{A}$ represent vectors and matrices, respectively. 
$\text{tr}(\cdot)$ denotes the trace of a square matrix.
$\mathcal{R}\{\cdot\}$ implies the real-part operator.
$\mathcal{CN}(\mu,\sigma^2)$ denotes the complex normal distribution with mean and variance of $\mu$ and $\sigma^2$. 
\vspace{-1em}

\section{System Model}\label{sec.model}
This work focuses on the sensing functionality in traffic scenarios, where the planar position and velocity are involved. 
Fig.~\ref{fig:model} illustrates an example of the cooperative network, where the nodes can operate in MMS, MBS, or MXS modes.
The positions of the $n$-th node and target are represented in Cartesian coordinates $(x_n,y_n)$ and $(x,y)$, and the velocity of the target is represented by $(v_x,v_y)$.
We assume all the \ac{Tx} and \ac{Rx} antennas are \acp{ULA} with sizes of $N_\text{t}$ and $N_\text{r}$, respectively. The inter-element spacing is set to $d=\lambda/2$, where $\lambda$ denotes the wavelength.
The \ac{AoD} $\vartheta$ and \ac{AoA} $\theta$ denote the angle between the target and the antenna normal, and $\varphi$ denotes the bearing angle, i.e., the angle between the target direction and the $x$-axis. In this paper, the number of nodes and sensing links is denoted by $N$ and $M$, respectively.

\begin{figure}
    \centering
    \includegraphics[width=0.7\linewidth]{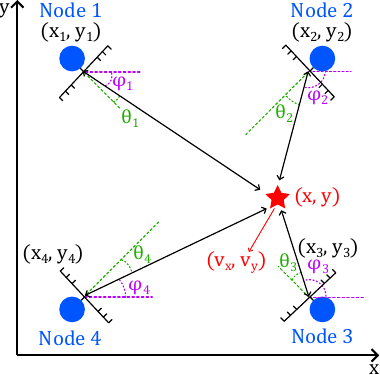}
    \caption{An example of the considered system model. All the sensing nodes can perform transmission or reception, or both.}
    \label{fig:model}
    \vspace{-1em}
\end{figure}

Considering in MBS and MXS, one Tx or Rx may correspond to several observations, we model the received signals by sensing links in this work.
The received signal of the $m$-th link corresponding to the $(n_\text{t},n_\text{r})$-th Tx and Rx is given by
\begin{align}
    \mathbf{y}_{m}(t)
    =\underbrace{A_{m}e^{j\phi_{m}}\mathbf{a}_{m}^*(\theta_{m}){s}_{m}(t-\tau_{m})e^{j2\pi f_{m}^\text{D}t}}_{\pmb{\mu}_{m}(t)}+\mathbf{n}_{m}(t),\label{eq.rec}
\end{align}
where $A_{m}$, $\phi_{m}$, $\tau_{m}$, and $f^\text{D}_{m}$ denote the amplitude, phase, delay, and Doppler shift of the $m$-th link. $\mathbf{a}_{m}^*(\theta_{m})=\mathbf{a}_{n_\text{r}}^*(\theta_{n_\text{r}})$ denotes the steering vector of the $m$-th link Rx. $A_{m}=a_{m}\mathbf{b}_{m}^H(\vartheta_{m})\mathbf{w}_m=a_{m}\mathbf{b}_{n_\text{t}}^H(\vartheta_{n_\text{t}})\mathbf{w}_{n_\text{t}}=N_\text{t}a_{m}$ combines the path loss and transmission gain, 
where $\mathbf{b}_{m}^H(\vartheta_{m})$ is the Tx steering vector, $\mathbf{w}$ is the beamforming vector. The \ac{AoD} $\vartheta_{m}$ is not considered in FIM since it does not introduce new \ac{DoF} of independent observation as $\mathbf{b}_{m}^H(\vartheta_{m})\mathbf{w}$ is scalar and cannot lead to additional observation dimensions \cite{11570946}. ${s}_{m}(t)={s}_{n_\text{t}}(t)$, where $\mathbb
E\{{s}_{n_\text{t}}(t){s}_{n_\text{t}'}(t)\}=0,\forall n_\text{t}\ne n_\text{t}'$ for ease of separation at the Rx \cite{4408448}.
$\mathbf{n}_{m}(t)=\mathbf{n}_{n_\text{r}}(t)\sim\mathcal{CN}(0,\sigma^2)^{N_\text{r}\times1}$ denotes the \ac{AWGN} whose power is assumed to be the same for all the links.

In our previous work \cite{11570946}, 
we demonstrated that for common ISAC waveforms, like PMCW, OFDM, and OTFS, the \ac{EFIM} of the signal-level parameters of interest (range, radial velocity, AoA) $\pmb{\Theta}=[r,v_{\text{r}},\theta]^T$ is completely or approximately diagonal:
\begin{align}\label{eq.sige}
    \mathbf{E}\negthinspace=\negthinspace\mathbf{C}^{-1}\negthinspace\approx\negthinspace\begin{bmatrix}
        E_r & 0 & 0\\
        0 & E_{v_\text{r}} & 0\\
        0 & 0 & E_{\theta}
    \end{bmatrix}\negthinspace=\negthinspace\begin{bmatrix}
        1/C_r & 0 & 0\\
        0 & 1/C_{v_\text{r}} & 0\\
        0 & 0 & 1/C_{\theta}
    \end{bmatrix}\negthinspace,\negthinspace
\end{align}
where for monostatic radar, the relationship between the EFIM elements of range and delay, as well as radial velocity and Doppler, follows $E_{r}^\text{M}=\big(\frac{2}{c_0}\big)^2E_{\tau},  E_{v_\text{r}}^\text{M}=\big(\frac{2}{\lambda}\big)^2E_{f^\text{D}}$; for bistatic radar, $E_{r}^\text{B}=\frac{1}{c_0^2}E_{\tau},  E_{v_\text{r}}^\text{B}=\frac{1}{\lambda^2}E_{f^\text{D}}$, and
\begin{align}\label{eq.siglvcrlb}
    &E_\tau={8\pi^2N_\text{r}\gamma B_\text{rms}^2},\quad E_{f^\text{D}}={8\pi^2N_\text{r}\gamma T_\text{rms}^2},\nonumber\\ &E_{\theta}=\frac{\pi^2\cos^2(\theta)N_\text{r}(N_\text{r}^2-1)\gamma}{6},
\end{align}
where $\gamma$ denotes the \ac{ESNR} calculating the overall energy ratio between signal and noise.
For signals with uniform power distribution over bandwidth and frame, such as FMCW, sinc-shaped PMCW, OFDM, and OTFS, the \ac{RMS} bandwidth and time follow $B_\text{rms}^2=B^2/12$, $T_\text{rms}^2=T_\text{F}^2/12$, where $B$ and $T_\text{F}$ are the bandwidth and frame length.

Independent of sensing mode, the relationship between the AoA and the bearing angle of the $m$-th link Rx is given by
\begin{align}
    \theta_m=\varphi_m^\text{r}-\theta_{0,m},
\end{align}
where $\theta_{0,m}$ represents the antenna normal direction of the Rx, $\varphi^\text{r}_m$ only depends on the Rx antenna. Hence, the formulas of AoA CRLB $C_{\theta}$ for monostatic and bistatic radars are the same.
\vspace{-1em}

\section{Full CRLB Derivation}\label{sec.full}
The transformation from signal-level EFIM to the \ac{FIM} of state parameters $\pmb{\eta}=[x,y,v_x,v_y]^T$ follows the chain rule since $\big(\frac{\partial \pmb{\mu}(t)}{\partial\pmb{\eta}}\big)^H\frac{\partial \pmb{\mu}(t)}{\partial\pmb{\eta}}=\big(\frac{\partial \pmb{\mu}(t)}{\partial\pmb{\Theta}}\frac{\partial\pmb{\Theta}}{\partial\pmb{\eta}}\big)^H\frac{\partial \pmb{\mu}(t)}{\partial\pmb{\Theta}}\frac{\partial\pmb{\Theta}}{\partial\pmb{\eta}}$. Consequently, the state parameter FIM of the $m$-th link is calculated by
\begin{align}
    \mathbf{F}_m\negthinspace=\negthinspace\frac{2}{\sigma^2}\mathcal{R}\Big\{\negthinspace\int_{-\infty}^{\infty}\negthinspace\Big(\frac{\partial \pmb{\mu}_m(t)}{\partial\pmb{\eta}}\Big)^{\negthinspace H}\frac{\partial \pmb{\mu}_m(t)}{\partial\pmb{\eta}}dt\Big\}
    \negthinspace=\negthinspace\mathbf{J}^{ T}_m\mathbf{E}_m\mathbf{J}_m,
\end{align}
where $\mathbf{J}_m$ is the Jacobian matrix of parameter transformation:
\begin{align}
    \mathbf{J}_m=\frac{\partial\pmb{\Theta}_m}{\partial\pmb{\eta}}=\begin{bmatrix}
        \frac{\partial r_m}{\partial x} & \frac{\partial r_m}{\partial y} & \frac{\partial r_m}{\partial v_x} & \frac{\partial r_m}{\partial v_y}\\
        \frac{\partial v_{\text{r},m}}{\partial x} & \frac{\partial v_{\text{r},m}}{\partial y} & \frac{\partial v_{\text{r},m}}{\partial v_x} & \frac{\partial v_{\text{r},m}}{\partial v_y}\\
        \frac{\partial \theta_m}{\partial x} & \frac{\partial \theta_m}{\partial y} & \frac{\partial \theta_m}{\partial v_x} & \frac{\partial \theta_m}{\partial v_y}
    \end{bmatrix}.
\end{align}

For independent observations of different links, the global FIM is calculated by the sum of the local FIMs:
\begin{align}\label{eq.FIM}{\setlength{\arraycolsep}{1.6pt} 
    \negthinspace\mathbf{F}\negthinspace=\negthinspace\sum_{m=1}^M\negthinspace\mathbf{J}_m^T\mathbf{E}_m\mathbf{J}_m\negthinspace=\negthinspace\begin{bmatrix}
        \mathbf{F}_\text{P} & \mathbf{F}_\text{PV}\\
        \mathbf{F}_\text{PV}^T & \mathbf{F}_\text{V}
    \end{bmatrix}\negthinspace=\negthinspace\begin{bmatrix}
        F_{xx} & F_{xy} & F_{xv_x} & F_{xv_y}\\
        F_{xy} & F_{yy} & F_{yv_x} & F_{yv_y}\\
        F_{xv_x} & F_{yv_x} & F_{v_xv_x} & F_{v_xv_y}\\
        F_{xv_y} & F_{yv_y} & F_{v_xv_y} & F_{v_yv_y}
    \end{bmatrix}\negthinspace\negthinspace,\negthinspace
}\end{align}
where $\mathbf{F}_\text{P}$ and $\mathbf{F}_\text{V}$ are the position and velocity blocks. The independent observation holds with orthogonal transmit waveforms and AWGN \cite{11570946}.
The position-velocity coupling matrix $\mathbf{F}_\text{PV}$ is usually neglected in related works \cite{7131233,10615966,10646234,10419729,10693955,11202824,10266619,5393291,11231051,pucci2025,xu2026}.
The EFIMs of position and velocity are obtained via the Schur complement by treating the other as the nuisance parameters:
\begin{align}\label{eq.efim}
    \mathbf{F}'_\text{P}=\mathbf{F}_\text{P}-\mathbf{F}_\text{PV}\mathbf{F}_\text{V}^{-1}\mathbf{F}_\text{PV}^T,\quad \mathbf{F}'_\text{V}=\mathbf{F}_\text{V}-\mathbf{F}_\text{PV}^T\mathbf{F}_\text{V}^{-1}\mathbf{F}_\text{PV}.
\end{align}

The CRLBs of position and velocity are denoted by the squares of PEB and VEB:
\begin{align}\label{eq.crlb}
    &C_\text{P}=\text{tr}(\mathbf{F}_\text{P}'^{-1})=\text{PEB}^2,\quad
    C_\text{V}=\text{tr}(\mathbf{F}_\text{V}'^{-1})=\text{VEB}^2.
\end{align}

In this section, the full PEB and VEB for different sensing types (MMS, MBS, and MXS) are calculated.
\vspace{-1em}

\subsection{MMS}

For monostatic radar, the delay and Doppler are given by 
\begin{subequations}\begin{align}
    &\negthinspace \tau_{m}=\frac{2r_{m}}{c_0}=\frac{2\Vert\mathbf{p}-\mathbf{p}_m\Vert}{c_0}=\frac{2\sqrt{\Delta x_m^2+\Delta y_m^2}}{c_0},\label{eq.taumono}\\
    &\negthinspace f_{m}^\text{D}\negthinspace=\negthinspace-\frac{2}{\lambda}v_{\text{r},m}\negthinspace=\negthinspace-\frac{2}{\lambda}\mathbf{v}^T\mathbf{u}_m\negthinspace=\negthinspace-\frac{2}{\lambda}(v_x\cos(\varphi_m)\negthinspace+\negthinspace v_y\sin(\varphi_m)),\negthinspace \label{eq.fmono}
\end{align}\end{subequations}
where $\mathbf{p}=[x,y]^T$ and $\mathbf{v}=[v_x,v_y]^T$ denote the target position and velocity. $v_{\text{r},m}$ represents the radial velocity with $v_{\text{r},m}>0$ denoting the motion away from the node.
$\mathbf{p}_m=[x_m,y_m]^T$ is the position of the \ac{TRx} node.
$\mathbf{u}_m=[\frac{\Delta x_m}{r_m},\frac{\Delta y_m}{r_m}]^T=[\cos(\varphi_m),\sin(\varphi_m)]^T=[c_m,s_m]^T$ represent the radial direction vector, where
$\Delta x_m=x-x_m,\ \Delta y_m=y-y_m$, and
$\varphi_m=\arctan({\Delta y_m}/{\Delta x_m})$. $c_0$ is the speed of light in vacuum. 

The Jacobian matrix of monostatic radars can be written as
 
\begin{align}{\setlength{\arraycolsep}{2.5pt}
    \mathbf{J}_m=\begin{bmatrix}
        c_m & s_m & 0 & 0\\
        a_m & b_m & c_m & s_m\\
        -\frac{s_m}{r_m} & \frac{c_m}{r_m} & 0 & 0
    \end{bmatrix},}
\end{align}
where
\begin{align}\label{eq.mmsch}
    &\negthinspace c_m=\frac{\Delta x_m}{r_m},\quad s_m=\frac{\Delta y_m}{r_m}\nonumber\\
    &\negthinspace a_m\negthinspace=\negthinspace\frac{\negthinspace v_x\negthinspace}{\negthinspace r_m\negthinspace}\negthinspace-\negthinspace\frac{\negthinspace(\Delta x_mv_x\negthinspace+\negthinspace\Delta y_mv_y)\Delta x_m\negthinspace}{r_m^3}\negthinspace=\negthinspace-\frac{\negthinspace s_m\negthinspace}{\negthinspace r_m\negthinspace}v_{\perp m},\negthinspace\nonumber \\
    &\negthinspace b_m\negthinspace=\negthinspace\frac{v_y}{r_m}\negthinspace-\negthinspace\frac{(\Delta x_mv_x\negthinspace+\negthinspace\Delta y_mv_y)\Delta y_m}{r_m^3}\negthinspace=\negthinspace\frac{c_m}{r_m}v_{\perp m},\negthinspace\nonumber\\
    &\negthinspace v_{\perp m}=\mathbf{v}^T\mathbf{u}_{\perp m},\quad \mathbf{u}_{\perp m}=[-s_m,c_m]^T.
\end{align}
$v_{\perp m}$ and $\mathbf{u}_{\perp m}$ are the velocity and unit vector in the tangential direction perpendicular to the $m$-th TRx-target radial direction. The global FIM is calculated by (\ref{eq.FIM}), where the elements are described in (\ref{eq.monoele}), in which $\mathbf{g}_m=[a_m,b_m]^T={v_{\perp m}\mathbf{u}_{\perp m}}/{r_m}$, also in the tangential subspace with an amplitude of the ratio between the tangential velocity and range. 

The EFIM is derived in (\ref{eq.efimmono}).
\begin{figure*}
{\setlength{\arraycolsep}{2pt} \begin{align}
    &\mathbf{F}_\text{P}=\sum_{m=1}^M\begin{bmatrix}
        c_m^2E_{r,m}^\text{M}+a_m^2E_{v_\text{r},m}^\text{M}+\frac{s_m^2}{r_m^2}E_{\theta,m} & c_ms_mE_{r,m}^\text{M}+a_mb_mE_{v_\text{r},m}^\text{M}-\frac{c_ms_m}{r_m^2}E_{\theta,m}\\
        c_ms_mE_{r,m}^\text{M} + a_mb_mE_{v_\text{r},m}^\text{M}-\frac{c_ms_m}{r_m^2}E_{\theta,m} & s_m^2E_{r,m}^\text{M}+b_m^2E_{v_\text{r},m}^\text{M}+\frac{c_m^2}{r_m^2}E_{\theta,m}
    \end{bmatrix}\nonumber\\
    &\quad\ =\sum_{m=1}^M\Big(E_{r,m}^\text{M} \mathbf{u}_m\mathbf{u}_m^T+E_{v_\text{r},m}^\text{M}\mathbf{g}_m\mathbf{g}_m^T+\frac{1}{r_m^2}{E_{\theta,m}}{\mathbf{u}_{\perp m}\mathbf{u}_{\perp m}^T}\Big),\nonumber
    \\
    &\mathbf{F}_\text{PV}\negthinspace=\negthinspace\sum_{m=1}^M\begin{bmatrix}
        a_mc_mE_{v_\text{r},m}^\text{M} & a_ms_mE_{v_\text{r},m}^\text{M}\\b_mc_mE_{v_\text{r},m}^\text{M} & b_ms_mE_{v_\text{r},m}^\text{M}
    \end{bmatrix}\negthinspace=\negthinspace\sum_{m=1}^ME_{v_\text{r},m}^\text{M}\mathbf{g}_m\mathbf{u}_m^T,\ \mathbf{F}_\text{V}\negthinspace=\negthinspace\sum_{m=1}^M\begin{bmatrix}
        c_m^2E_{v_\text{r},m}^\text{M} & c_ms_mE_{v_\text{r},m}^\text{M}\\
        c_ms_mE_{v_\text{r},m}^\text{M} & s_m^2E_{v_\text{r},m}^\text{M}
    \end{bmatrix}\negthinspace=\negthinspace\sum_{m=1}^ME_{v_\text{r},m}^\text{M}\mathbf{u}_m\mathbf{u}_m^T. \label{eq.monoele}
\end{align}}
\begin{subequations}\label{eq.efimmono}\begin{align}
    &\negthinspace\mathbf{F}'_\text{\negthinspace P}\negthinspace=\negthinspace\negthinspace\underbrace{\sum_{\negthinspace m=1\negthinspace}^M\negthinspace\Big(\negthinspace E_{r,m}^\text{M} \mathbf{u}_m\mathbf{u}_m^T\negthinspace+\negthinspace\frac{1}{r_m^2}E_{\theta,m} \mathbf{u}_{\perp m}\mathbf{u}_{\perp m}^T\negthinspace\Big)}_{\mathbf{F}_{\text{P}0}}\negthinspace\negthinspace+\negthinspace\overbrace{\underbrace{\sum_{\negthinspace m=1\negthinspace}^M\negthinspace E_{v_\text{r},m}^\text{M}\mathbf{g}_m\mathbf{g}_m^T}_{\mathbf{T}_\text{V}}\negthinspace-\negthinspace\Big(\negthinspace\sum_{\negthinspace m=1\negthinspace}^M\negthinspace E_{v_\text{r},m}^\text{M}\mathbf{g}_m\mathbf{u}_m^T\negthinspace\Big)\negthinspace\Big(\negthinspace\sum_{\negthinspace m=1\negthinspace}^M\negthinspace E_{v_\text{r},m}^\text{M}\mathbf{u}_m\mathbf{u}_m^T\negthinspace\Big)^{\negthinspace\negthinspace-1}\negthinspace\Big(\negthinspace\sum_{\negthinspace m=1\negthinspace}^M\negthinspace E_{v_\text{r},m}^\text{M}\mathbf{u}_m\mathbf{g}_m^T\negthinspace\Big)}^{\pmb{\Delta}_\text{V}}\negthinspace,\negthinspace\\
    &\negthinspace\mathbf{F}'_\text{V}\negthinspace=\negthinspace\underbrace{\sum_{m=1}^M\negthinspace E_{v_\text{r},m}^\text{M}\mathbf{u}_m\mathbf{u}_m^T}_{\mathbf{F}_{\text{V}0}=\mathbf{F}_\text{V}}\negthinspace-\negthinspace\underbrace{\Big(\negthinspace\sum_{m=1}^M\negthinspace  E_{v_\text{r},m}^\text{M}\mathbf{u}_m\mathbf{g}_m^T\Big)\negthinspace\Big(\negthinspace\sum_{m=1}^M\negthinspace E_{r,m}^\text{M} \mathbf{u}_m\mathbf{u}_m^T\negthinspace+\negthinspace E_{v_\text{r},m}^\text{M}\mathbf{g}_m\mathbf{g}_m^T\negthinspace+\negthinspace{\frac{1}{r_m^2}E_{\theta,m}}{\mathbf{u}_{\perp m}\mathbf{u}_{\perp m}^T}\negthinspace\Big)^{\negthinspace-1}\negthinspace\Big(\negthinspace\sum_{m=1}^M\negthinspace E_{v_\text{r},m}^\text{M}\mathbf{g}_m\mathbf{u}_m^T\negthinspace\Big)}_{\pmb{\Delta}_\text{P}}\negthinspace.\negthinspace\label{eq.efimmonov}
\end{align}\end{subequations}\end{figure*}
In $\mathbf{F}_\text{P}'$, the term of $\mathbf{F}_{\text{P}0}$ implies the information contributed by range and AoA measurements, whose contributions respectively exist in the radial and tangential subspaces.
The existence of $\pmb{\Delta}_\text{V}$ implies that the knowledge of radial velocity ($E_{v_\text{r},m}$) can contribute to position information through the sensitivity to the tangential motion (included in $\mathbf{g}_m$). 
Note that the term $\mathbf{T}_\text{V}$ comes from the position block $\mathbf{F}_\text{P}$, constituting $\pmb{\Delta}_\text{V}$ with the coupling terms together.
Consequently, the PEB depends on the target velocity, which is undesired in optimizations since the target velocity is generally unknown.

Similarly, in $\mathbf{F}_\text{V}'$, the first term ($\mathbf{F}_{\text{V}0}=\mathbf{F}_\text{V}$) denotes the information obtained by the observation of radial velocities, while the coupling term $\pmb{\Delta}_\text{P}$ introduces an information loss produced by the position uncertainty and tangential motion, reducing the available velocity information in radial subspace.
\vspace{-1em}

\subsection{MBS}
For bistatic radar, the delay and Doppler can be written as
\begin{subequations}\begin{align}
    &\tau_{m}=\frac{r_{m}}{c_0}=\frac{r_{m}^\text{t}+r_m^\text{r}}{c_0}=\frac{\Vert\mathbf{p}-\mathbf{p}_m^\text{t}\Vert+\Vert\mathbf{p}-\mathbf{p}_m^\text{r}\Vert}{c_0}\nonumber\\
    &=\frac{\sqrt{\Delta (x_m^\text{t})^{2}+(\Delta y_m^\text{t})^2}+\sqrt{\Delta (x_m^\text{r})^{2}+(\Delta y_m^\text{r})^2}}{c_0},\\
    &f_{m}^\text{D}=-\frac{v_{\text{r},m}}{\lambda}=-\frac{1}{\lambda}(v_{\text{r},m}^\text{t}+v_{\text{r},m}^\text{r})=-\frac{1}{\lambda}\mathbf{v}^T(\mathbf{u}_m^\text{t}+\mathbf{u}_m^\text{r})\nonumber\\
    &\negthinspace=\negthinspace-\frac{1}{\lambda}\negthinspace(v_x(\cos(\negthinspace\varphi_m^\text{t}\negthinspace)\negthinspace+\negthinspace\cos(\negthinspace\varphi_m^\text{r}\negthinspace))\negthinspace+\negthinspace v_y(\sin(\negthinspace\varphi_m^\text{t}\negthinspace)\negthinspace+\negthinspace\sin(\negthinspace\varphi_m^\text{r}\negthinspace))),\negthinspace
\end{align}\end{subequations}
where $\mathbf{p}_m^\text{t}=[x_m^\text{t},y_m^\text{t}]^T$ and $\mathbf{p}_m^\text{r}=[x_m^\text{r},y_m^\text{r}]^T$ represent the positions of the Tx and Rx nodes corresponding to the $m$-th sensing link.
$\mathbf{u}_m^\text{t}=[\cos(\varphi_m^\text{t}),\sin(\varphi_m^\text{t})]^T=[c_m^\text{t},s_m^\text{t}]^T$ and $\mathbf{u}_m^\text{r}=[\cos(\varphi_m^\text{r}),\sin(\varphi_m^\text{r})]^T=[c_m^\text{r},s_m^\text{r}]^T$ are the Tx and Rx direction vectors. $\varphi_m^\text{t}=\arctan({\Delta y_m^\text{t}}/{\Delta x_m^\text{t}})$ and $\varphi_m^\text{r}=\arctan({\Delta y_m^\text{r}}/{\Delta x_m^\text{r}})$ are the Tx and Rx bearing angles. For bistatic radar, we denote the total range and radial velocity of the $m$-th link as $r_m=r_m^\text{t}+r_m^\text{r}$ and $v_{\text{r},m}=v_{\text{r},m}^\text{t}+v_{\text{r},m}^\text{r}$.

The Jacobian matrix of bistatic radars can be written as
\begin{align}{\setlength{\arraycolsep}{2.5pt}
    \mathbf{J}_m=\begin{bmatrix}
        c_m & s_m & 0 & 0\\
        a_m & b_m & c_m & s_m\\
        -\frac{s_m^\text{r}}{r_m^\text{r}} & \frac{c_m^\text{r}}{r_m^\text{r}} & 0 & 0
    \end{bmatrix}\negthinspace,\negthinspace}
\end{align}
where $c_m=c_m^\text{t}+c_m^\text{r}, s_m=s_m^\text{t}+s_m^\text{r}, a_m=a_m^\text{t}+a_m^\text{r}, b_m=b_m^\text{t}+b_m^\text{r}, \mathbf{u}_m=\mathbf{u}_m^\text{t}+\mathbf{u}_m^\text{r}$, and
\begin{align}\label{eq.mbscomp}
    &c_m^\text{t}=\frac{\Delta x_m^\text{t}}{r_m^\text{t}},\ c_m^\text{r}=\frac{\Delta x_m^\text{r}}{r_m^\text{r}},\ s_m^\text{t}=\frac{\Delta y_m^\text{t}}{r_m^\text{t}},\ s_m^\text{r}=\frac{\Delta y_m^\text{r}}{r_m^\text{r}},\nonumber\\
    &a_m^\text{t}=\frac{ v_x}{ r_m^\text{t}}-\frac{(\Delta x_m^\text{t}v_x+\Delta y_m^\text{t}v_y)\Delta x_m^\text{t}}{(r_m^\text{t})^3}=-\frac{ s_m^\text{t}}{ r_m^\text{t}}v_{\perp m}^\text{t},\nonumber\\ 
    &a_m^\text{r}=\frac{ v_x}{ r_m^\text{r}}-\frac{(\Delta x_m^\text{r}v_x+\Delta y_m^\text{r}v_y)\Delta x_m^\text{r}}{(r_m^\text{r})^3}=-\frac{ s_m^\text{r}}{ r_m^\text{r}}v_{\perp m}^\text{r},\nonumber\\
    &b_m^\text{t}=\frac{ v_y}{ r_m^\text{t}}-\frac{(\Delta x_m^\text{t}v_x+\Delta y_m^\text{t}v_y)\Delta y_m^\text{t}}{(r_m^\text{t})^3}=\frac{ c_m^\text{t}}{ r_m^\text{t}}v_{\perp m}^\text{t},\nonumber\\ 
    &b_m^\text{r}=\frac{ v_y}{ r_m^\text{r}}-\frac{(\Delta x_m^\text{r}v_x+\Delta y_m^\text{r}v_y)\Delta y_m^\text{r}}{(r_m^\text{r})^3}=\frac{ c_m^\text{r}}{ r_m^\text{r}}v_{\perp m}^\text{r},\nonumber\\
    &v_{\perp m}^\text{t}=\mathbf{v}^T\mathbf{u}_{\perp m}^\text{t},\ v_{\perp m}^\text{r}=\mathbf{v}^T\mathbf{u}_{\perp m}^\text{r},\nonumber\\
    &\mathbf{u}_{\perp m}^\text{t}=[-s_m^\text{t},c_m^\text{t}]^T,\ \mathbf{u}_{\perp m}^\text{r}=[-s_m^\text{r},c_m^\text{r}]^T,\nonumber\\ &v_{\perp m}=v_{\perp m}^\text{t}+v_{\perp m}^\text{r},\ \mathbf{u}_{\perp m}=\mathbf{u}_{\perp m}^\text{t}+\mathbf{u}_{\perp m}^\text{r}.
\end{align}

The global FIM follows the calculation in (\ref{eq.FIM}), whereas the elements are given in (\ref{eq.biele}), in which $\mathbf{g}_m=[a_m,b_m]^T={v_{\perp m}^\text{t}\mathbf{u}_{\perp m}^\text{t}}/{r_m^\text{t}}+{v_{\perp m}^\text{r}\mathbf{u}_{\perp m}^\text{r}}/{r_m^\text{r}}$. The elements have a similar definition as MMS, only with changes in the signal-level terms $E_{r,m}^\text{B},E_{v_\text{r},m}^\text{B}$ explained in Section~\ref{sec.model} and the components defined regarding the bistatic radar characteristics in (\ref{eq.mbscomp}). Note that the AoA-related terms only include the components of Rx, hence $C_\theta$ has the same formula for MMS and MBS, and its corresponding term ${E_{\theta,m}}{\mathbf{u}_{\perp m}^\text{r}(\mathbf{u}_{\perp m}^\text{r}})^T/(r_m^\text{r})^2$ only involves the Rx-related parameters.

\begin{figure*}
{\setlength{\arraycolsep}{2pt} \begin{align}
    &\mathbf{F}_\text{P}=\sum_{m=1}^M\begin{bmatrix}
        c_m^2E_{r,m}^\text{B}+a_m^2E_{v_\text{r},m}^\text{B}+\frac{(s_m^\text{r})^2}{(r_m^\text{r})^2}E_{\theta,m} & c_ms_mE_{r,m}^\text{B}+a_mb_mE_{v_\text{r},m}^\text{B}-\frac{c_m^\text{r}s_m^\text{r}}{(r_m^\text{r})^2}E_{\theta,m}\\
        c_ms_mE_{r,m}^\text{B} + a_mb_mE_{v_\text{r},m}^\text{B}-\frac{c_m^\text{r}s_m^\text{r}}{(r_m^\text{r})^2}E_{\theta,m} & s_m^2E_{r,m}^\text{B}+b_m^2E_{v_\text{r},m}^\text{B}+\frac{(c_m^\text{r})^2}{(r_m^\text{r})^2}E_{\theta,m}
    \end{bmatrix}\nonumber\\
    &\quad\ =\sum_{m=1}^M\Big(E_{r,m}^\text{B} \mathbf{u}_m\mathbf{u}_m^T+E_{v_\text{r},m}^\text{B}\mathbf{g}_m\mathbf{g}_m^T+\frac{1}{(r_m^\text{r})^2}{E_{\theta,m}}{\mathbf{u}_{\perp m}^\text{r}(\mathbf{u}_{\perp m}^\text{r}})^T\Big),\nonumber
    \\
    &\mathbf{F}_\text{PV}\negthinspace=\negthinspace\sum_{m=1}^M\begin{bmatrix}
        a_mc_mE_{v_\text{r},m}^\text{B} & a_ms_mE_{v_\text{r},m}^\text{B}\\b_mc_mE_{v_\text{r},m}^\text{B} & b_ms_mE_{v_\text{r},m}^\text{B}
    \end{bmatrix}\negthinspace=\negthinspace\sum_{m=1}^ME_{v_\text{r},m}^\text{B}\mathbf{g}_m\mathbf{u}_m^T,\ \mathbf{F}_\text{V}\negthinspace=\negthinspace\sum_{m=1}^M\begin{bmatrix}
        c_m^2E_{v_\text{r},m}^\text{B} & c_ms_mE_{v_\text{r},m}^\text{B}\\
        c_ms_mE_{v_\text{r},m}^\text{B} & s_m^2E_{v_\text{r},m}^\text{B}
    \end{bmatrix}\negthinspace=\negthinspace\sum_{m=1}^ME_{v_\text{r},m}^\text{B}\mathbf{u}_m\mathbf{u}_m^T. \label{eq.biele}
\end{align}}
\begin{subequations}\label{eq.efimbi}\begin{align}
    &\negthinspace\mathbf{F}'_\text{\negthinspace P}\negthinspace=\negthinspace\negthinspace\underbrace{\sum_{\negthinspace m=1\negthinspace}^M\negthinspace\negthinspace\Big(\negthinspace E_{r,m}^\text{B}\negthinspace \mathbf{u}_m\mathbf{u}_m^T\negthinspace+\negthinspace\frac{1}{\negthinspace(r_{\negthinspace m}^\text{r})^2\negthinspace}E_{\negthinspace\theta,m}\negthinspace \mathbf{u}_{\perp m}^\text{r}\negthinspace(\negthinspace\mathbf{u}_{\perp m}^\text{r}\negthinspace)\negthinspace^T\negthinspace\Big)}_{\mathbf{F}_{\text{P}0}}\negthinspace\negthinspace+\negthinspace\overbrace{\underbrace{\sum_{\negthinspace m=1\negthinspace}^M\negthinspace E_{v_\text{r},m}^\text{B}\mathbf{g}_m\mathbf{g}_m^T}_{\mathbf{T}_\text{V}}\negthinspace-\negthinspace\Big(\negthinspace\sum_{\negthinspace m=1\negthinspace}^M\negthinspace E_{v_\text{r},m}^\text{B}\mathbf{g}_m\negthinspace\mathbf{u}_m^T\negthinspace\Big)\negthinspace\Big(\negthinspace\sum_{\negthinspace m=1\negthinspace}^M\negthinspace E_{v_\text{r},m}^\text{B}\negthinspace\mathbf{u}_m\mathbf{u}_m^T\negthinspace\Big)^{\negthinspace\negthinspace-1}\negthinspace\negthinspace\Big(\negthinspace\sum_{\negthinspace m=1\negthinspace}^M\negthinspace E_{v_\text{r},m}^\text{B}\negthinspace\mathbf{u}_m\mathbf{g}_m^T\negthinspace\Big)}^{\pmb{\Delta}_\text{V}}\negthinspace,\negthinspace\\
    &\negthinspace\mathbf{F}'_\text{V}\negthinspace=\negthinspace\negthinspace\underbrace{\sum_{\negthinspace m=1\negthinspace}^M\negthinspace E_{v_\text{r},m}^\text{B}\negthinspace\mathbf{u}_m\mathbf{u}_m^T}_{\mathbf{F}_{\text{V}0}=\mathbf{F}_\text{V}}\negthinspace-\negthinspace\underbrace{\Big(\negthinspace\sum_{\negthinspace m=1\negthinspace}^M\negthinspace  E_{v_\text{r},m}^\text{B}\negthinspace\mathbf{u}_m\mathbf{g}_m^T\Big)\negthinspace\Big(\negthinspace\sum_{\negthinspace m=1\negthinspace}^M\negthinspace E_{r,m}^\text{B}\negthinspace \mathbf{u}_m\mathbf{u}_m^T\negthinspace+\negthinspace E_{v_\text{r},m}^\text{B}\mathbf{g}_m\mathbf{g}_m^T\negthinspace+\negthinspace{\frac{1}{\negthinspace(r_m^\text{r})^2\negthinspace}E_{\theta,m}}\negthinspace{\mathbf{u}_{\perp m}^\text{r}(\negthinspace\mathbf{u}_{\perp m}^\text{r}\negthinspace)^T}\negthinspace\Big)^{\negthinspace-1}\negthinspace\negthinspace\Big(\negthinspace\sum_{\negthinspace m=1\negthinspace}^M\negthinspace E_{v_\text{r},m}^\text{B}\mathbf{g}_m\mathbf{u}_m^T\negthinspace\Big)}_{\pmb{\Delta}_\text{P}}\negthinspace.\negthinspace\label{eq.efimbiv}
\end{align}\end{subequations}\end{figure*}

The position and velocity EFIMs are given in (\ref{eq.efimbi}). For MBS, the range and radial velocity measurements contribute along the bistatic radial directions determined by both the Tx and Rx nodes, thereby providing richer spatial diversity than MMS. $\pmb{\Delta}_\text{V}$ and $\pmb{\Delta}_\text{P}$ denote the coupling effects that mainly influence the tangential position information and radial velocity information, respectively, which effects are related to the tangential velocity and uncertainty in position and radial velocity.

Note that the FIMs and CRLBs of MMS and MBS have a consistent expression, where MMS can be regarded as a special case of MBS with collocated Tx and Rx nodes for each link. In this case, the resulting FIMs of MMS and MBS are identical. This implies that the CRLBs of MMS and MBS mainly differs in spatial diversity instead of the amount of acquirable information for each link.

\subsection{MXS}
MXS can be regarded as a hybrid of MMS and MBS, which is also named heterogeneous sensing in \cite{pucci2025}. The nodes can operate as monostatic TRx, bistatic Tx or Rx, or a hybrid of both. 
The resulting global FIM is the sum of the individual monostatic and bistatic links, i.e.,
\begin{align}
    \mathbf{F}=\sum_{m\in\mathcal{M}}\mathbf{F}_m+\sum_{n\in\mathcal{B}}\mathbf{F}_n,
\end{align}
where $\mathcal{M}$ and $\mathcal{B}$ denote the sets of monostatic and bistatic links, respectively, and $|\mathcal{M}|+|\mathcal{B}|=M$. The elements $\mathbf{F}_\text{P},\mathbf{F}_\text{V},\text{ and }\mathbf{F}_\text{PV}$ are the sum of those of the individual monostatic and bistatic links; the EFIMs and CRLBs are calculated following (\ref{eq.efim}) and (\ref{eq.crlb}). 

A problem is that the formulas of PEB and VEB are highly complicated and intractable. Consequently, the application of optimization tools, like \ac{SDP} and \ac{SCA}, becomes difficult due to the existence of inverse matrix calculation in $\pmb{\Delta}_\text{P}$ and $\pmb{\Delta}_\text{V}$ and the unknown velocity. For optimization problems, a tractable target function maintaining high accuracy with respect to the full CRLB is desired.

\section{Tractable CRLB and Validity Conditions}\label{sec.simp}
The tractable EFIM requires removing the coupling terms, where the inverse matrix calculation leads to optimization-unfriendly formulations, and eliminating the impact of unknown velocity carried in $\mathbf{g}_m$, since the existence of an unknown state parameter is not expected in the target function. Therefore, the strategy is to simplify the position and velocity EFIMs to $\mathbf{F}_{\text{P}0}$ and $\mathbf{F}_{\text{V}0}=\mathbf{F}_\text{V}$. 
Although the corresponding $C_\text{P}$ and $C_\text{V}$ are still non-convex, convexation approaches like \ac{SDP} can be easily applied. 
To this end, it is necessary to evaluate the accuracy loss due to the simplification and establish the validity conditions under which the simplification can be safely applied.

Before analyzing the validity conditions for individual sensing types, we first introduce a general guiding principle:
\begin{Proposition}\label{po.1}
    If $\epsilon \mathbf{F}_{\mathrm{P}0}-\mathbf{T}_\mathrm{V}\succeq0,\ 0<\epsilon\ll1$, the PEB and VEB derived from the simplified EFIM closely approximate those derived from the full EFIM.
\end{Proposition}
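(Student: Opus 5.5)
The plan is to sandwich both full EFIMs between their simplified counterparts and a $(1\pm\epsilon)$-scaled version of them, in the Loewner order. Since $X\mapsto \text{tr}(X^{-1})$ is monotone on positive definite matrices, this will give multiplicative bounds on $C_\text{P}$ and $C_\text{V}$ of the form $1+O(\epsilon)$.

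\textbf{Position.} Write $\mathbf{F}'_\text{P}=\mathbf{F}_{\text{P}0}+\pmb{\Delta}_\text{V}$ with
\[
\pmb{\Delta}_\text{V}=\mathbf{T}_\text{V}-\mathbf{F}_\text{PV}\mathbf{F}_\text{V}^{-1}\mathbf{F}_\text{PV}^T .
\]
The key observation is that $\pmb{\Delta}_\text{V}$ is itself a Schur complement. It is the complement of the PSD matrix
\[
\sum_m E_{v_\text{r},m}\,[\mathbf{g}_m;\mathbf{u}_m][\mathbf{g}_m;\mathbf{u}_m]^T=\begin{bmatrix}\mathbf{T}_\text{V}&\mathbf{F}_\text{PV}\\ \mathbf{F}_\text{PV}^T&\mathbf{F}_\text{V}\end{bmatrix},
\]
taken with respect to its $\mathbf{F}_\text{V}$ block. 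Hence $0\preceq\pmb{\Delta}_\text{V}\preceq\mathbf{T}_\text{V}$. Combined with the hypothesis $\mathbf{T}_\text{V}\preceq\epsilon\mathbf{F}_{\text{P}0}$, this gives
\[
\mathbf{F}_{\text{P}0}\preceq\mathbf{F}'_\text{P}\preceq(1+\epsilon)\mathbf{F}_{\text{P}0},
\]
and therefore
\[
\tfrac{1}{1+\epsilon}\,\text{tr}(\mathbf{F}_{\text{P}0}^{-1})\le C_\text{P}\le\text{tr}(\mathbf{F}_{\text{P}0}^{-1}).
\]
This is the PEB claim. It uses the sufficient-spatial-diversity assumption that $\mathbf{F}_\text{V}$ is invertible, as it must be for the EFIMs in (\ref{eq.efim}) to be defined.

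\textbf{Velocity.} Here $\mathbf{F}'_\text{V}=\mathbf{F}_\text{V}-\pmb{\Delta}_\text{P}$, with
\[
\pmb{\Delta}_\text{P}=\mathbf{F}_\text{PV}^T(\mathbf{F}_{\text{P}0}+\mathbf{T}_\text{V})^{-1}\mathbf{F}_\text{PV}.
\]
The hypothesis gives $\mathbf{F}_{\text{P}0}+\mathbf{T}_\text{V}\succeq\mathbf{F}_{\text{P}0}\succeq\epsilon^{-1}\mathbf{T}_\text{V}$, so
\[
\pmb{\Delta}_\text{P}\preceq\epsilon\,\mathbf{F}_\text{PV}^T\mathbf{T}_\text{V}^{\dagger}\mathbf{F}_\text{PV}.
\]
Using the pseudoinverse is legitimate because the range of $\mathbf{F}_\text{PV}=\sum_m E_{v_\text{r},m}\mathbf{g}_m\mathbf{u}_m^T$ lies in the range of $\mathbf{T}_\text{V}$. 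The PSD block matrix above then gives the generalized Schur bound
\[
\mathbf{F}_\text{PV}^T\mathbf{T}_\text{V}^{\dagger}\mathbf{F}_\text{PV}\preceq\mathbf{F}_\text{V}.
\]
By symmetry this follows from the same argument, i.e.\ Cauchy--Schwarz on $\sum_m E_{v_\text{r},m}(\mathbf{x}^T\mathbf{g}_m)(\mathbf{u}_m^T\mathbf{y})$. Putting these together,
\[
(1-\epsilon)\mathbf{F}_\text{V}\preceq\mathbf{F}'_\text{V}\preceq\mathbf{F}_\text{V},
\]
and therefore
\[
\text{tr}(\mathbf{F}_\text{V}^{-1})\le C_\text{V}\le\tfrac{1}{1-\epsilon}\,\text{tr}(\mathbf{F}_\text{V}^{-1}).
\]
Taking square roots, both PEB and VEB deviate by a relative factor of $1+O(\epsilon)$, which establishes the proposition. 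The argument applies verbatim to MMS, MBS and MXS, because all three share the structure in (\ref{eq.monoele}) and (\ref{eq.biele}), with links summed.

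\textbf{Main obstacle.} The delicate step is the velocity bound. One must justify replacing $(\mathbf{F}_{\text{P}0}+\mathbf{T}_\text{V})^{-1}$ by $\epsilon\mathbf{T}_\text{V}^{\dagger}$ when $\mathbf{T}_\text{V}$ may be singular. In MMS, for instance, all $\mathbf{g}_m$ may be collinear, or $\mathbf{v}=0$ gives $\mathbf{T}_\text{V}=0$. My first approach is to regularize: replace $\mathbf{T}_\text{V}$ by $\mathbf{T}_\text{V}+\delta\mathbf{I}$ and let $\delta\to0$. Alternatively, I would bound
\[
\mathbf{x}^T\mathbf{F}_\text{PV}^T\mathbf{A}^{-1}\mathbf{F}_\text{PV}\mathbf{x}=\sup_{\mathbf{z}}\frac{(\mathbf{z}^T\mathbf{F}_\text{PV}\mathbf{x})^2}{\mathbf{z}^T\mathbf{A}\mathbf{z}}
\]
with $\mathbf{A}=\mathbf{F}_{\text{P}0}+\mathbf{T}_\text{V}$. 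Cauchy--Schwarz bounds the numerator by $(\mathbf{z}^T\mathbf{T}_\text{V}\mathbf{z})(\mathbf{x}^T\mathbf{F}_\text{V}\mathbf{x})$, and the hypothesis gives $\mathbf{z}^T\mathbf{T}_\text{V}\mathbf{z}\le\frac{\epsilon}{1+\epsilon}\,\mathbf{z}^T\mathbf{A}\mathbf{z}$. This direct route avoids pseudoinverses entirely, and it even yields the slightly sharper constant $\epsilon/(1+\epsilon)$.
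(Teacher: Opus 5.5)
Your proposal is correct and follows the paper's proof essentially step for step. For position, both arguments use the Loewner sandwich $\mathbf{F}_{\text{P}0}\preceq\mathbf{F}'_\text{P}\preceq(1+\epsilon)\mathbf{F}_{\text{P}0}$ obtained from $0\preceq\pmb{\Delta}_\text{V}\preceq\mathbf{T}_\text{V}$; for velocity, both combine the generalized Schur bound $\mathbf{F}_\text{PV}^T\mathbf{T}_\text{V}^{\dagger}\mathbf{F}_\text{PV}\preceq\mathbf{F}_\text{V}$ with a Loewner bound on $\pmb{\Delta}_\text{P}$. The differences are minor: your main line uses $\mathbf{F}_{\text{P}0}+\mathbf{T}_\text{V}\succeq\epsilon^{-1}\mathbf{T}_\text{V}$ and so gets $(1-\epsilon)\mathbf{F}_\text{V}\preceq\mathbf{F}'_\text{V}$ rather than the paper's $\frac{1}{1+\epsilon}\mathbf{F}_\text{V}\preceq\mathbf{F}'_\text{V}$, though your Cauchy--Schwarz variant recovers the paper's constant $\frac{\epsilon}{1+\epsilon}$. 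You also explicitly justify the pseudoinverse step via $\text{range}(\mathbf{F}_\text{PV})\subseteq\text{range}(\mathbf{T}_\text{V})$, which the paper leaves implicit.
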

\begin{proof}
    Starting from position EFIM, $\pmb{\Delta}_\text{V}=\mathbf{T}_\text{V}-\mathbf{F}_\text{PV}\mathbf{F}_\text{V}^{-1}\mathbf{F}_\text{PV}^T$ is \ac{PSD} since the block matrix is PSD:
    \begin{align}\label{eq.schur}
        \begin{bmatrix}
            \mathbf{T}_\text{V} & \mathbf{F}_\text{PV}\\
            \mathbf{F}_\text{PV}^T & \mathbf{F}_\text{V}
        \end{bmatrix}=\sum_{m=1}^M E_{v_\text{r},m}\begin{bmatrix}
            \mathbf{g}_m \\ \mathbf{u}_m
        \end{bmatrix}\begin{bmatrix}
            \mathbf{g}_m \\ \mathbf{u}_m
        \end{bmatrix}^T\succeq0.
    \end{align}
    Because $\mathbf{F}_\text{V}\succ0$, the Schur complement yields $\pmb{\Delta}_\text{V}=\mathbf{T}_\text{V}-\mathbf{F}_\text{PV}\mathbf{F}_\text{V}^{-1}\mathbf{F}_\text{PV}^T\succeq0$ and $\pmb{\Delta}_\text{V}\preceq\mathbf{T}_\text{V}$.

    If $\epsilon \mathbf{F}_{\text{P}0}-\mathbf{T}_\text{V}\succeq0$, we have
    \begin{align}
        (1+\epsilon)\mathbf{F}_{\text{P}0}\succeq \mathbf{F}_{\text{P}0}+\mathbf{T}_\text{V}\succeq \mathbf{F}'_\text{P}\succeq\mathbf{F}_{\text{P}0}.
    \end{align}
    According to the property of PSD matrices,
    \begin{align}
        \text{tr}(\mathbf{F}_{\text{P}0}^{-1})\negthinspace\ge\negthinspace \text{tr}((\mathbf{F}_{\text{P}}')^{-1})\negthinspace\ge\negthinspace \frac{\text{tr}(\mathbf{F}_{\text{P}0}^{-1})}{1+\epsilon}\Leftrightarrow C_{\text{P}0}\negthinspace\ge\negthinspace C_\text{P}\negthinspace\ge\negthinspace\frac{C_{\text{P}0}}{1+\epsilon},\negthinspace
    \end{align}
    where $C_{\text{P}0}$ denotes $\text{PEB}^2$ calculated based on $\mathbf{F}_{\text{P}0}$. If $0<\epsilon\ll1$, $C_{\text{P}0}\approx C_{\text{P}}$, i.e., the PEBs derived from simplified and full EFIMs are closely approximated.

    Furthermore, since $\mathbf{T}_\text{V}\succeq0$, the Schur complement from (\ref{eq.schur}) yields $\mathbf{F}_\text{V}-\mathbf{F}_\text{PV}^T\mathbf{T}_\text{V}^{\dagger}\mathbf{F}_\text{PV}\succeq0$. If $\epsilon \mathbf{F}_{\text{P}0}-\mathbf{T}_\text{V}\succeq0$, we have $\pmb{\Delta}_\text{P}=\mathbf{F}_\text{PV}^T(\mathbf{F}_{\text{P}0}+\mathbf{T}_\text{V})^{-1}\mathbf{F}_\text{PV}\preceq\frac{\epsilon}{1+\epsilon}\mathbf{F}_\text{PV}^T\mathbf{T}_\text{V}^{\dagger}\mathbf{F}_\text{PV}$, thus
    \begin{align}
        \mathbf{F}_\text{V}\succeq \mathbf{F}_\text{V}'\succeq\mathbf{F}_\text{V}-\frac{\epsilon}{1+\epsilon}\mathbf{F}_\text{PV}^T\mathbf{T}_\text{V}^{\dagger}\mathbf{F}_\text{PV}\succeq \frac{1}{1+\epsilon}\mathbf{F}_\text{V},
    \end{align}
    \begin{align}
        \negthinspace\negthinspace\text{tr}(\mathbf{F}_\text{V})\negthinspace\le\negthinspace \text{tr}(\mathbf{F}_\text{V}')\negthinspace\le\negthinspace(1\negthinspace+\negthinspace\epsilon)\text{tr}(\mathbf{F}_\text{V})\negthinspace\Leftrightarrow\negthinspace C_{\text{V}0}\negthinspace\le\negthinspace C_\text{V}\negthinspace\le\negthinspace(1\negthinspace+\negthinspace\epsilon)C_{\text{V}0},\negthinspace\negthinspace
    \end{align}
    where $C_{\text{V}0}$ denotes $\text{VEB}^2$ calculated based on $\mathbf{F}_{\text{V}0}=\mathbf{F}_{\text{V}}$. If $0<\epsilon\ll1$, $C_{\text{V}0}\approx C_{\text{V}}$.
\end{proof}

\textbf{Proposition~\ref{po.1}} provides a sufficient condition for safely adopting the simplified CRLB. However, this condition is purely mathematical and lacks physical interpretation. Therefore, a more geometrical condition is derived based on \textbf{Proposition~\ref{po.1}} in this section, taking the characteristics of different sensing types into account.

\subsection{MMS}\label{sec.simpmms}
For MMS, the simplified position and velocity EFIMs can be written as
\begin{subequations}\label{eq.mmssimp}\begin{align}
    &\mathbf{F}_{\text{P}0}=\sum_{ m=1}^M\Big( E_{r,m}^\text{M} \mathbf{u}_m\mathbf{u}_m^T+\frac{1}{r_m^2}E_{\theta,m} \mathbf{u}_{\perp m}\mathbf{u}_{\perp m}^T\Big),\\
    &\mathbf{F}_{\text{V}0}=\sum_{m=1}^M E_{v_\text{r},m}^\text{M}\mathbf{u}_m\mathbf{u}_m^T.
\end{align}\end{subequations}

The condition $\epsilon \mathbf{F}_{\mathrm{P}0}-\mathbf{T}_\mathrm{V}\succeq0$ is equivalent to
\begin{align}\label{eq.mmsprop}
    &\sum_{\negthinspace m=1\negthinspace}^M\negthinspace\Big[\epsilon E_{r,m}^\text{M}(\mathbf{u}_m^T\mathbf{x})^2\negthinspace+\negthinspace\frac{1}{r_m^2}\Big(\negthinspace\epsilon{E_{\theta,m}}-{v_{\perp m}^2E_{v_\text{r},m}^\text{M}}\negthinspace\Big)(\mathbf{u}_{\perp m}^T\mathbf{x})^2\Big]\negthinspace\ge\negthinspace 0,\negthinspace\nonumber\\
    &\forall\, ||\mathbf{x}||^2=1.
\end{align} 
Since for each link, the terms of $E_{\theta,m}$ and $E_{v_\text{r},m}^\text{M}$ lie in the same direction $\mathbf{u}_{\perp m}$, if $E_{\theta,m}\gg v_{\perp m}^2E_{v_\text{r},m}^\text{M},\forall m\in \mathcal{M}$, the criterion of \textbf{Proposition~\ref{po.1}} safely holds. Considering the signal-level EFIM components (\ref{eq.sige}), $E_{\theta,m}\gg v_{\perp m}^2E_{v_\text{r},m}^\text{M}$ corresponds to
\begin{align}\label{eq.tan}
    &\cos^2(\theta)(N_\text{r}^2-1)\gg 4\Bar{v}_{\perp m}^2,
\end{align}
where $\Bar{v}_{\perp m}$ denotes the tangential velocity normalized to the velocity resolution $\Delta v=\frac{\lambda}{2T_\text{F}}$. However, the condition in (\ref{eq.tan}) is typically restrictive under practical system configurations. For example, for a 5G FR2 signal with carrier frequency $f_\text{c}=28$\,GHz, $T_\text{F}=10$\,ms, the velocity resolution is $\Delta v\approx0.54$\,m/s. Assuming the target vehicle lies in the normal direction of the Rx antenna ($\theta=0$) with a tangential velocity of $20$\,m/s, (\ref{eq.tan}) requires a horizontal antenna size of $N_\text{r}\gg75$, which is generally unsatisfied in realistic scenarios. Besides, the direction term $\cos^2(\theta)$ makes the issue more critical.

Another solution is leveraging $E_{r,m}^\text{M}$.
Rewrite (\ref{eq.mmsprop}) by only maintaining the terms of $E_{r,m}^\text{M}$ and $E_{v_\text{r},m}^\text{M}$, yielding
\begin{align}\label{eq.rv}
    \sum_{ m=1}^M\Big[\epsilon E_{r,m}^\text{M}(\mathbf{u}_m^T\mathbf{x})^2-\frac{v_{\perp m}^2}{r_m^2}{E_{v_\text{r},m}^\text{M}}(\mathbf{u}_{\perp m}^T\mathbf{x})^2\Big]\ge 0.
\end{align}
For each link, the contribution of range and radial velocity information exists in orthogonal subspaces. To make (\ref{eq.rv}) tractable, we assume that the nodes provide sufficiently diverse sensing directions, which is approximately satisfied in sensing networks with dense or well-distributed nodes, and the target should be surrounded by the cooperating nodes, e.g., lies inside the polygon formed by the nodes with sufficient distance from the polygon edges. In these scenarios, the directional diversity is guaranteed, while the fluctuation of $E_{r,m}$ and $E_{v_\text{r},m}/r_m^2$ over multiple nodes is limited or can be approximately averaged out.
In consequence, the weighted directional matrices $\sum_{ m=1}^M E_{r,m}^\text{M}\mathbf{u}_m\mathbf{u}_m^T$ and $\sum_{ m=1}^M\frac{v_{\perp m}^2}{r_m^2}{E_{v_\text{r},m}^\text{M}}\mathbf{u}_{\perp m}\mathbf{u}_{\perp m}^T$ are approximately isotropic, the directional projections $(\mathbf{u}_{ m}^T\mathbf{x})^2$ and $(\mathbf{u}_{\perp m}^T\mathbf{x})^2$ can be approximated by their angular averages, 
yielding 
\begin{subequations}\label{eq.monomat}
\begin{align}
    &\sum_{ m=1}^M E_{r,m}^\text{M}(\mathbf{u}_m^T\mathbf{x})^2\approx\frac{1}{2}\sum_{ m=1}^M E_{r,m}^\text{M},\\
    &\sum_{ m=1}^M\frac{v_{\perp m}^2}{r_m^2}{E_{v_\text{r},m}^\text{M}}(\mathbf{u}_{\perp m}^T\mathbf{x})^2\le\sum_{ m=1}^M\frac{||\mathbf{v}||^2}{r_m^2}{E_{v_\text{r},m}^\text{M}}(\mathbf{u}_{\perp m}^T\mathbf{x})^2\nonumber\\
    &\approx\frac{1}{2}\sum_{ m=1}^M\frac{||\mathbf{v}||^2}{r_m^2}{E_{v_\text{r},m}^\text{M}},
\end{align}    
\end{subequations}
where the replacement of $v_\perp$ by $\mathbf{v}$ avoids the dependence on velocity direction in the following derivation.
Consequently, (\ref{eq.rv}) can be approximately rewritten as
\begin{align}\label{eq.crinodes}
    &\epsilon\sum_{ m=1}^M E_{r,m}^\text{M}\ge\sum_{ m=1}^M\frac{||\mathbf{v||}^2}{r_m^2}{E_{v_\text{r},m}^\text{M}}\Leftrightarrow\epsilon\sum_{ m=1}^M{\gamma_m}\ge{\Bar{v}}^2\sum_{ m=1}^M\frac{\gamma_m}{\Bar{r}_m^2}\nonumber\\
    &\overset{\gamma\propto \varsigma/r^4}{\Rightarrow}\epsilon\sum_{ m=1}^M\frac{\varsigma_m}{r_m^4}\ge \Bar{v}^2\sum_{ m=1}^M\frac{\varsigma_m}{\Bar{r}_m^2r_m^4},
\end{align}
where the bandwidth, time, and antenna resources of each node are assumed to be identical. $\gamma\propto \varsigma/r^4$ comes from the monostatic radar propagation loss equation, where $\varsigma$ denotes the \ac{RCS}. $\Bar{r}_m=r_m/\Delta r$ and $\Bar{v}=||\mathbf{v}||/\Delta v$ denote the normalized range and absolute velocity with respect to their resolution. 
(\ref{eq.crinodes}) provides an interpretable and geometrical engineering criterion, being described by the SNR-weighted ratio between the normalized velocity and range.

A clearer interpretation can be obtained by the per-node analysis in (\ref{eq.crinodes}), resulting in
\begin{align}\label{eq.crinode}
    \epsilon{\Bar{r}_m^2}\ge\Bar{v}^2,\quad 0<\epsilon\ll1,\quad\forall m,
\end{align}
being transformed into a purely geometric condition. This criterion leverages the relationship between the normalized range and velocity, implying that a higher normalized range and a lower normalized velocity can reduce the difference between the full and simplified CRLBs. (\ref{eq.crinode}) does not require identical resource allocation.
In the case of sufficient spatial diversity, (\ref{eq.crinode}) can be approximately regarded as a sufficient condition under which the simplified CRLB is valid.

\subsection{MBS}
For MBS, the simplified position and velocity EFIMs can be written as
\begin{subequations}\label{eq.mbsfimp}\begin{align}
    &\mathbf{F}_{\text{P}0}\negthinspace=\negthinspace\sum_{ m=1}^M\Big( E_{r,m}^\text{B} \mathbf{u}_m\mathbf{u}_m^T+\frac{1}{(r_m^\text{r})^2}E_{\theta,m} \mathbf{u}_{\perp m}^\text{r}(\mathbf{u}_{\perp m}^\text{r})^T\Big),\negthinspace\\
    &\mathbf{F}_{\text{V}0}=\sum_{m=1}^M E_{v_\text{r},m}^\text{B}\mathbf{u}_m\mathbf{u}_m^T.
\end{align}\end{subequations}

Since the criterion based on $E_{\theta}$ is invalid, 
we rewrite $\epsilon \mathbf{F}_{\mathrm{P}0}-\mathbf{T}_\mathrm{V}\succeq0$ by only maintaining the terms of $E_{r,m}^\text{B}$ and $E_{v_\text{r},m}^\text{B}$, yielding
\begin{align}\label{eq.rvb}
    &\epsilon\sum_{ m=1}^M E_{r,m}^\text{B}\big((\mathbf{u}_m^\text{t})^T\mathbf{x}+(\mathbf{u}_m^\text{r})^T\mathbf{x}\big)^2 \ge\nonumber\\
    &\sum_{ m=1}^M{E_{v_\text{r},m}^\text{B}}\Big(\frac{ v_{\perp m}^\text{t}}{r_m^\text{t}}(\mathbf{u}_{\perp m}^\text{t})^T\mathbf{x}+\frac{ v_{\perp m}^\text{r}}{r_m^\text{r}}(\mathbf{u}_{\perp m}^\text{r})^T\mathbf{x}\Big)^2.\negthinspace
\end{align}

Under the assumption of diverse sensing direction and approximately isotropic characteristics, the cross terms $\mathbf{u}^\text{t}_m(\mathbf{u}^\text{r}_m)^T$ and $\mathbf{u}^\text{t}_{\perp m}(\mathbf{u}^\text{r}_{\perp m})^T$ are averaged out over multiple sensing links. 
The terms in (\ref{eq.rvb}) are rewritten as
\begin{subequations}\begin{align}
    &\sum_{ m=1}^M E_{r,m}^\text{B}\big((\mathbf{u}_m^\text{t})^T\mathbf{x}+(\mathbf{u}_m^\text{r})^T\mathbf{x}\big)^2\approx\sum_{ m=1}^M E_{r,m}^\text{B},\\
    &\sum_{ m=1}^M{E_{v_\text{r},m}^\text{B}}\Big(\frac{ v_{\perp m}^\text{t}}{r_m^\text{t}}(\mathbf{u}_{\perp m}^\text{t})^T\mathbf{x}+\frac{ v_{\perp m}^\text{r}}{r_m^\text{r}}(\mathbf{u}_{\perp m}^\text{r})^T\mathbf{x}\Big)^2\nonumber\\
    &\le \sum_{ m=1}^M{E_{v_\text{r},m}^\text{B}}\Big(\frac{ ||\mathbf{v}||}{r_m^\text{t}}(\mathbf{u}_{\perp m}^\text{t})^T\mathbf{x}+\frac{ ||\mathbf{v}||}{r_m^\text{r}}(\mathbf{u}_{\perp m}^\text{r})^T\mathbf{x}\Big)^2\nonumber\\
    &\approx\frac{1}{2}\sum_{ m=1}^M{E_{v_\text{r},m}^\text{B}} \Big(\frac{||\mathbf{v}||^2}{(r_m^\text{t})^2}+\frac{||\mathbf{v}||^2}{(r_m^\text{r})^2}\Big).
\end{align}    
\end{subequations}

Hence, (\ref{eq.rvb}) can be approximately rewritten as
\begin{align}\label{eq.crinodesb}
    &\epsilon\sum_{ m=1}^M E_{r,m}^\text{B}\ge \frac{1}{2}\sum_{ m=1}^M{E_{v_\text{r},m}^\text{B}} \Big(\frac{||\mathbf{v}||^2}{(r_m^\text{t})^2}+\frac{||\mathbf{v}||^2}{(r_m^\text{r})^2}\Big)\Leftrightarrow\nonumber\\
    & \epsilon\sum_{ m=1}^M\negthinspace \gamma_m\negthinspace\ge\negthinspace\frac{1}{2}\Bar{v}^2\negthinspace\sum_{ m=1}^M\negthinspace\gamma_m\Big(\frac{1}{(\Bar{r}_m^\text{t})^2}+\frac{1}{(\Bar{r}_m^\text{r})^2}\Big)\negthinspace\ge\negthinspace\Bar{v}^2\negthinspace\sum_{ m=1}^M\negthinspace\gamma_m\frac{1}{\Bar{r}_m^\text{t}\Bar{r}_m^\text{r}}\nonumber\\
    &\negthinspace\overset{\gamma\propto \varsigma/({r}_m^\text{t}{r}_m^\text{r})^2}{\Rightarrow}\negthinspace\epsilon\sum_{ m=1}^M\frac{\varsigma_m}{({r}_m^\text{t})^2({r}_m^\text{r})^2}\ge \Bar{v}^2\sum_{ m=1}^M\frac{\varsigma_m}{\Bar{r}_m^\text{t}\Bar{r}_m^\text{r}({r}_m^\text{t})^2({r}_m^\text{r})^2},
\end{align}
where the bandwidth, time, and antenna resources of each node are assumed to be identical. The clearer interpretation by per-node analysis without requirements for identical resources is given by
\begin{align}\label{eq.crinodeb}
    \epsilon{\Bar{r}_m^\text{t}\Bar{r}_m^\text{r}}\ge\Bar{v}^2, \quad 0<\epsilon\ll1,\quad\forall m.
\end{align}

Equation (\ref{eq.crinodeb}) puts forward a requirement on both the Tx and Rx ranges, while MMS can be considered as a special case of MBS when the Tx and Rx are collocated.

\begin{table*}
    \centering
    \caption{Representative 5G scenarios and parameters \cite{3gpp38913}. The form of $a_1/a_2$ implies the value of MMS and MBS, respectively.}
    \begin{tabular}{c|c|c|c|c|c|c|c|c}
    \hline
        Scenario & $f_\text{c}$\,(GHz) around & B\,(GHz) & ISD\,(m) & $v$\,(km/h) & $\Delta r$\,(m) & $\Delta v$\,(m/s) & $\Bar{v}$ & $r \text{ or } \sqrt{r^\text{t}r^\text{r}}$\,(m) \\ \hline
        \multirow{3}{*}{Indoor hotspot} & 4 & $\le0.2$ & \multirow{3}{*}{20} & \multirow{3}{*}{3} & $\ge0.75/1.5$ & 3.75/7.5 & 0.22/0.11 & $\gg0.17$ \\ \cline{2-3}\cline{6-9} & 30 & $\le1$ & & & $\ge0.15/0.3$ & 0.5/1 & 1.67/0.83 & $\gg0.25$ \\ \cline{2-3}\cline{6-9} & 70 & $\le1$ & & & $\ge0.15/0.3$ & 0.21/0.43 & 3.89/1.94 & $\gg0.58$ \\ \hline
       \multirow{2}{*}{Dense urban} & {4} & {$\le0.2$} & \multirow{2}{*}{\makecell[c]{Macro: 200\\Micro: e.g., 100}} & \multirow{2}{*}{30} & {$\ge0.75/1.5$} & {3.75/7.5} & {2.22/1.11} & {$\gg1.67$} \\ \cline{2-3}\cline{6-9}
        & 30 & $\le 1$ & & & $\ge0.15/0.3$ & 0.5/1 & 16.67/8.33 & $\gg2.5$ \\ \hline
        \multirow{2}{*}{Rural} & 0.7 & $\le0.02$ & \multirow{2}{*}{1732 or 5000} & \multirow{2}{*}{120} & $\ge7.5/15$ & 21.43/42.86 & 1.56/0.78 & $\gg11.67$ \\ \cline{2-3}\cline{6-9}
        & 4 & $\le0.2$ & & & $\ge0.75/1.5$ & 3.75/7.5 & 8.89/4.44 & $\gg6.67$ \\ \hline
        {Highway} & {6} & {$\le0.2$} & \makecell[c]{BS: 1732 or 500\\RSU: 50 or 100} & {300} & $\ge0.75/1.5$ & 2.5/5 & 33.33/16.67 & $\gg25$ \\\hline
    \end{tabular}
    \label{tab:scenario}
    \vspace{-1em}
\end{table*}

\subsection{MXS}
Since MXS is the mix of MMS and MBS, the simplified EFIMs are defined by (\ref{eq.mmssimp}) for monostatic links and (\ref{eq.mbsfimp}) for bistatic links, i.e.,
\begin{subequations}\begin{align}
    &\mathbf{F}_{\text{P}0}=\sum_{m\in\mathcal{M}}\mathbf{F}_{\text{P}0,m}+\sum_{n\in\mathcal{B}}\mathbf{F}_{\text{P}0,n},\\ &\mathbf{F}_{\text{V}0}=\sum_{m\in\mathcal{M}}\mathbf{F}_{\text{V}0,m}+\sum_{n\in\mathcal{B}}\mathbf{F}_{\text{V}0,n},
\end{align}\end{subequations}
and the criterion follows (\ref{eq.crinode}) and (\ref{eq.crinodeb}) for monostatic and bistatic links, respectively. 
The proposed criterion is summarized as \textbf{Theorem~\ref{t.1}}:

\begin{Theorem}\label{t.1}
    In a cooperative sensing network that provides sufficient spatial diversity and approximately isotropic range and velocity information over the detection area, a sufficient but conservative condition under which the simplified PEB and VEB closely approximate their full counterparts is that (\ref{eq.crinode}) holds for each monostatic link and (\ref{eq.crinodeb}) holds for each bistatic link.
\end{Theorem}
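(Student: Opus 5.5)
The plan is to reduce Theorem~\ref{t.1} to Proposition~\ref{po.1}. It suffices to show that, under the stated diversity and isotropy assumptions, the per-link conditions (\ref{eq.crinode}) and (\ref{eq.crinodeb}) imply $\epsilon\mathbf{F}_{\text{P}0}-\mathbf{T}_\text{V}\succeq0$ for the MXS network. Both $\mathbf{F}_{\text{P}0}$ and $\mathbf{T}_\text{V}$ are sums of per-link PSD contributions over $\mathcal{M}\cup\mathcal{B}$. I will therefore write the quadratic form $\mathbf{x}^T(\epsilon\mathbf{F}_{\text{P}0}-\mathbf{T}_\text{V})\mathbf{x}$ for an arbitrary unit vector $\mathbf{x}$ as a sum of monostatic and bistatic link terms. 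The AoA terms $E_{\theta,m}\mathbf{u}_{\perp m}\mathbf{u}_{\perp m}^T/r_m^2$ are nonnegative, so I will drop them; keeping only the range and radial-velocity terms gives a stronger inequality. This is exactly the \enquote{sufficient but conservative} part of the claim.

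The first step bounds the velocity term by its worst case. I will use $v_{\perp m}^2\le\|\mathbf{v}\|^2$ for each Tx and Rx leg, and for a bistatic link $(a+b)^2\le 2(a^2+b^2)$ on the $\mathbf{g}_m$ components. This yields a bound on $\mathbf{x}^T\mathbf{T}_\text{V}\mathbf{x}$ that is independent of the velocity direction. The second step invokes the isotropy assumption: the weighted directional sums $\sum E_{r,m}\mathbf{u}_m\mathbf{u}_m^T$ and $\sum (\|\mathbf{v}\|^2/r_m^2)E_{v_\text{r},m}\mathbf{u}_{\perp m}\mathbf{u}_{\perp m}^T$ (and their bistatic analogues, with the $\mathbf{u}^\text{t}(\mathbf{u}^\text{r})^T$ cross terms averaged out) are replaced by half their traces times the identity, as in (\ref{eq.monomat}). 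This reduces the matrix inequality to the scalar inequality
\begin{align*}
\epsilon\Big(\sum_{m\in\mathcal{M}}E_{r,m}^\text{M}+\sum_{n\in\mathcal{B}}E_{r,n}^\text{B}\Big)\gtrsim \sum_{m\in\mathcal{M}}\frac{\|\mathbf{v}\|^2}{r_m^2}E_{v_\text{r},m}^\text{M}+\sum_{n\in\mathcal{B}}\frac{\|\mathbf{v}\|^2}{2}\Big(\frac{1}{(r_n^\text{t})^2}+\frac{1}{(r_n^\text{r})^2}\Big)E_{v_\text{r},n}^\text{B}.
\end{align*}

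The third step substitutes the signal-level EFIMs (\ref{eq.siglvcrlb}). Because $E_{r}/E_{v_\text{r}}=(T_\text{F}/B\cdot\lambda/c_0)^{-2}$ up to the same monostatic/bistatic factors, each link's velocity term equals its range term times $\bar v^2/\bar r_m^2$ for monostatic links. For bistatic links, the corresponding factor is $\bar v^2\tfrac12(1/(\bar r^\text{t})^2+1/(\bar r^\text{r})^2)$. The inequality then becomes comparable term by term: a sum of nonnegative weights $E_{r,m}$ multiplied by $\epsilon$ on one side, and by these factors on the other. A sufficient condition is that each link satisfies its inequality individually, since summing per-link inequalities with nonnegative weights preserves them. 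This per-link formulation removes any need for identical resources or SNR weighting. For monostatic links the condition is exactly (\ref{eq.crinode}). For bistatic links I will use the per-link condition directly, and then relate it to (\ref{eq.crinodeb}) via the geometric-mean form as in (\ref{eq.crinodesb}).

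The main obstacle is rigor in the isotropy and bistatic steps. Replacing projections by angular averages is an approximation, not an inequality. Moreover, the bistatic reduction from $\tfrac12(1/(\bar r^\text{t})^2+1/(\bar r^\text{r})^2)$ to $1/(\bar r^\text{t}\bar r^\text{r})$ goes in the wrong direction by AM–GM, so (\ref{eq.crinodeb}) by itself only controls the geometric mean. I would handle both points by stating them as modelling hypotheses covered by \enquote{sufficient spatial diversity}: the assumption is that the directional sums are isotropic up to a bounded condition factor, and that Tx and Rx ranges are comparable. Under these hypotheses the inequalities hold up to constants absorbable into $\epsilon$, and Proposition~\ref{po.1} then gives $C_{\text{P}0}/(1+\epsilon)\le C_\text{P}\le C_{\text{P}0}$ and $C_{\text{V}0}\le C_\text{V}\le(1+\epsilon)C_{\text{V}0}$.
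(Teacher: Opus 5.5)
Your plan is essentially the paper's own derivation in Section~\ref{sec.simp}: reduce to \textbf{Proposition~\ref{po.1}}, keep only the $E_r$ and $E_{v_\text{r}}$ terms, bound $v_\perp$ by $\|\mathbf{v}\|$, replace the directional projections by their angular averages under the isotropy assumption, and compare link by link via $E_{v_\text{r}}/E_r=(\Delta r/\Delta v)^2$. Two small slips are harmless constants: your stated ratio has $\lambda$ and $c_0$ swapped, and your bound $(a+b)^2\le 2(a^2+b^2)$ gives twice the bistatic right-hand side you display.

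Your AM--GM objection is correct and points to a real weak spot in the paper's chain (\ref{eq.crinodesb}). That chain only shows that (\ref{eq.crinodeb}) is implied by the averaged bistatic condition $\epsilon\ge\frac{\bar v^2}{2}\big((\bar r_m^\text{t})^{-2}+(\bar r_m^\text{r})^{-2}\big)$, not that it is sufficient for it. So your extra hypothesis of comparable $r_m^\text{t}$ and $r_m^\text{r}$ is genuinely needed, not cosmetic; alternatively, you could require $\epsilon\min(\bar r_m^\text{t},\bar r_m^\text{r})^2\ge\bar v^2$.
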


Table~\ref{tab:scenario} summarizes representative 5G traffic scenarios and corresponding parameters and requirements studied in \cite{3gpp38913}. Particularly, the requirement to $r_m$ (\ref{eq.crinode}) or $r_m^\text{t}r_m^\text{r}$ (\ref{eq.crinodeb}) are calculated based on $\Delta r$, $\Delta v$, and general speed $v$, and compared to \ac{ISD}. The 5G frame length $T_\text{F}$ is always 10\,ms.
It is clear that the presented criteria can be generally satisfied since the required range is much shorter than the ISD, especially for urban, rural, and highway scenarios, where the distance between the targets (vehicles, pedestrians, etc.) and the sensing nodes, such as BSs and \acp{RSU}, is generally lower bounded by safety spacing, lane structure, and infrastructure deployment constraints. Therefore, the criterion can be generally satisfied, implying that the simplified CRLB can be safely adopted in general cases.

{\textit{Remark:} The results in Sections~\ref{sec.full} and \ref{sec.simp} are also applicable to multi-target scenarios. Since the transformation matrix $\mathbf{J}_m$ of each target only involves its own parameters, the multi-target coupling influences the PEB and VEB only through the impact on the signal-level EFIMs $\mathbf{E}$ (\ref{eq.sige}). For instance, in a sensing link, the coupling with target B reduces the delay information of target A to $E_\text{r,A}-E_\text{r,AB}E_\text{r,B}^{-1}E_\text{r,AB}$, where $E_\text{r,AB}$ denotes the coupling term \cite{8356190}. However, it is proved in \cite{11481148} that a sufficient condition for the coupling to be negligible is that the delay or Doppler difference between the targets is larger than the corresponding resolution cell size, which is generally satisfied, especially for mmWave due to the large bandwidth and high carrier frequency. 
For instance, for a 5G waveform at $28$\,GHz with a bandwidth of 400\,MHz and frame length of 10\,ms, the range and velocity resolutions are 0.375\,m and 0.535\,m/s, being small enough.
Based on this precondition, the impact of multi-target coupling is not discussed in this work.
}

\section{Numerical Results}\label{sec.sim}
The simulations in this work aim to investigate the performance limitation, including coverage and accuracy, of sensing in cooperative ISAC networks. Afterward, the approximation error of the simplified CRLB is examined, and its relationship with the proposed criteria in \textbf{Theorem~\ref{t.1}} is evaluated. The system configuration considers the 5G parameters in urban scenarios from Table~\ref{tab:scenario}, and the detailed parameters are given in Table~\ref{tab:para}. Without loss of generality, four sensing nodes arranged at the vertices of a square are included in the simulation (as shown in Fig.~\ref{fig:model}), their coordinates (in meters) are $\mathbf{p}_1=[-50,-50]^T,\mathbf{p}_2=[50,-50]^T,\mathbf{p}_3=[50,50]^T,$ and $\mathbf{p}_4=[-50,50]^T$, thus the \ac{ISD} is 100\,m. Their ULAs are oriented toward the center $[0,0]^T$.
The target is assumed to be a car with RCS of 10\,dBsm. Unless otherwise specified, the default velocity (in km/h) is set to $\mathbf{v}=[30,0]^T$. The signal-level EFIM is calculated by (\ref{eq.sige}) and (\ref{eq.siglvcrlb}) with $B_\text{rms}^2=B^2/12$, $T_\text{rms}^2=T_\text{F}^2/12$.
The ESNR in the simulation is calculated based on the radar propagation equation, i.e., 
\begin{align}
    \gamma=\frac{P_\text{t}N_\text{t}^2\lambda^2\varsigma N_\text{s}}{(4\pi)^3(r^\text{t}r^\text{r})^2\sigma^2},
\end{align}
where for monostatic link, $r^\text{t}=r^\text{r}$. $N_\text{s}=\gamma/\text{SNR}$ denotes the number of samples. For PMCW, the sampling rate is generally the chip rate $T_\text{c}=(1+\alpha)/B$, hence $N_\text{s}=BT_\text{F}/(1+\alpha)$; for OFDM and OTFS, $T_\text{s}=1/B$, $N_\text{s}=BT_\text{F}$ \cite{11570946}. In this work, we assume $N_\text{s}=BT_\text{F}$. For beamforming, \ac{MRT} $\mathbf{w}=\mathbf{b}_{m}(\vartheta_{m})$ is considered, hence the Tx antenna gain is $N_\text{t}^2$.

\begin{table}
    \centering
    \caption{System parameters.}
    \begin{tabular}{c|c}
        \hline
        Parameter & Value \\ \hline
        Carrier frequency $f_\text{c}$ & 28\,GHz \\
        Bandwidth $B$ & 400\,MHz \\
        Frame length $T_\text{F}$ & 10\,ms \\
        Antenna & ULA with $N_\text{t}=N_\text{r}=16$ \\
        Range resolution $\Delta r$ & 0.375\,m / 0.75\,m \\
        Velocity resolution $\Delta v$ & 0.54\,m/s / 1.07\,m/s \\
        Transmission power per element $P_\text{t}$ & 13\,dBm \\
        Noise figure & 10\,dB \\
        RCS $\varsigma$ & 10\,dBsm \\
        \hline
    \end{tabular}
    \label{tab:para}
\end{table}

\subsection{Performance Investigation}\label{sec.perf}

\begin{figure*}[!t]
    \centering
    \subfloat[MMS PEB.]{
        \includegraphics[width=0.175\linewidth]{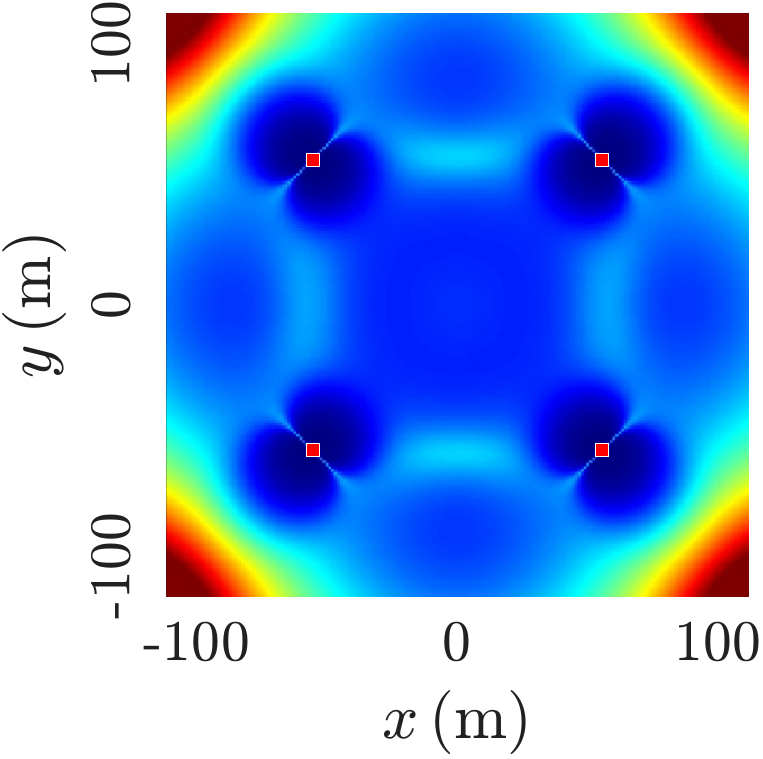}\label{fig.MMSPEB}
    }\hspace{-1mm}
    \subfloat[$1\times3$ MBS PEB.]{
        \includegraphics[width=0.175\linewidth]{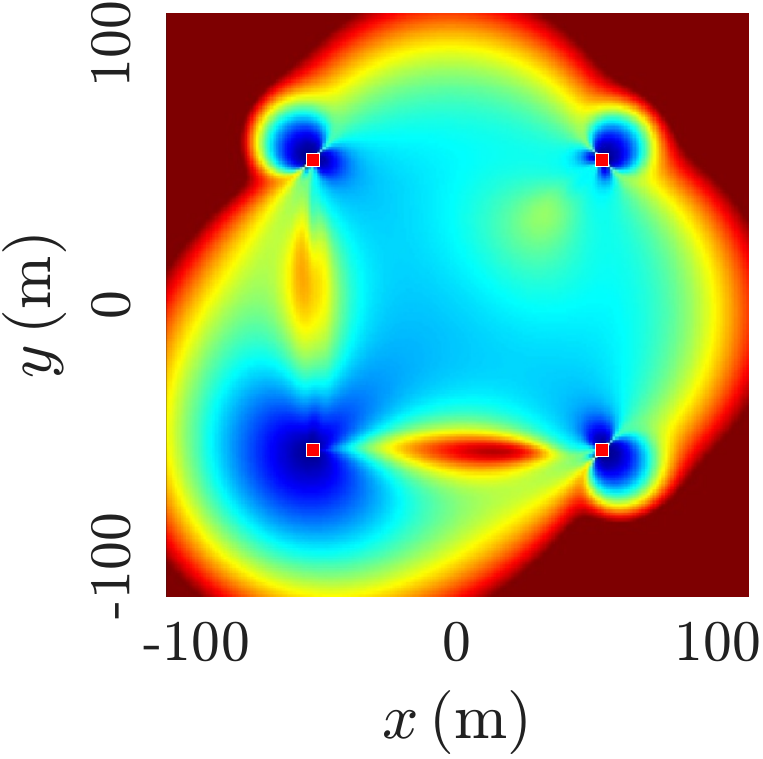}\label{fig.MBSPEB1}
    }\hspace{-1mm}
    \subfloat[$4\times3$ MBS PEB.]{
        \includegraphics[width=0.175\linewidth]{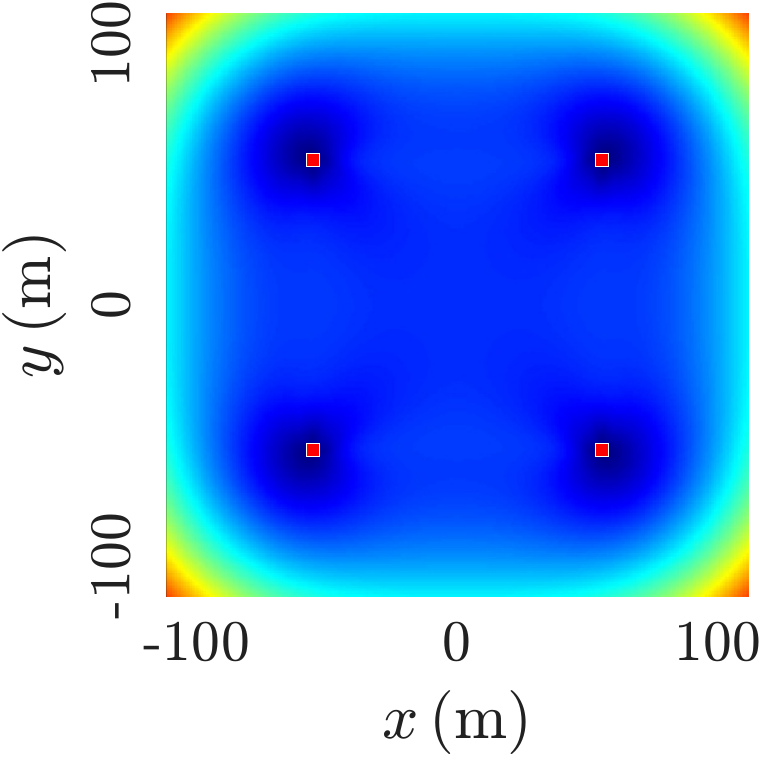}\label{fig.MBSPEB4}
    }\hspace{-1mm}
    \subfloat[$1\times4$ MXS PEB.]{
        \includegraphics[width=0.175\linewidth]{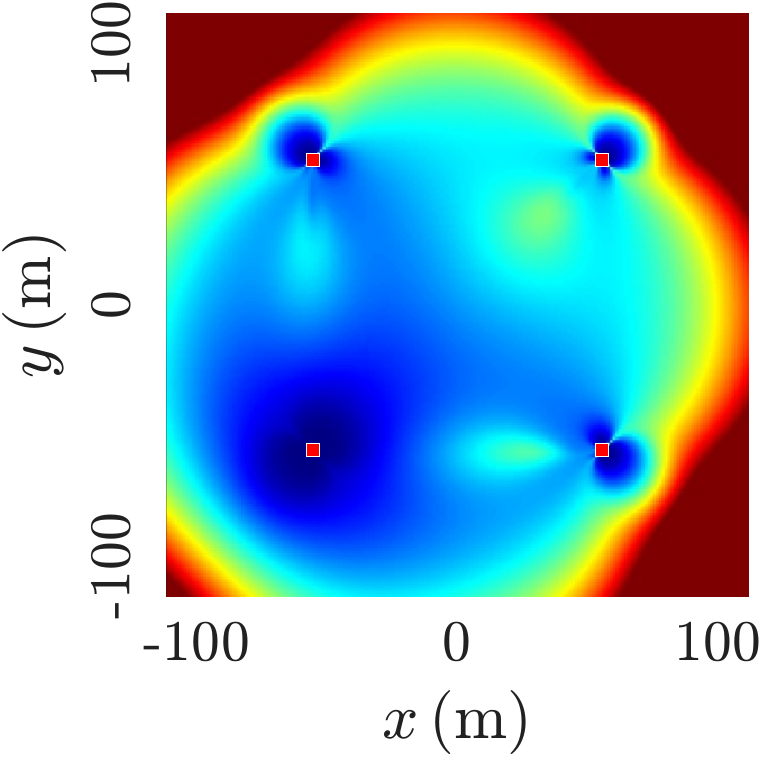}\label{fig.MXSPEB1}
    }\hspace{-1mm}
    \subfloat[$4\times4$ MXS PEB.]{
        \includegraphics[width=0.175\linewidth]{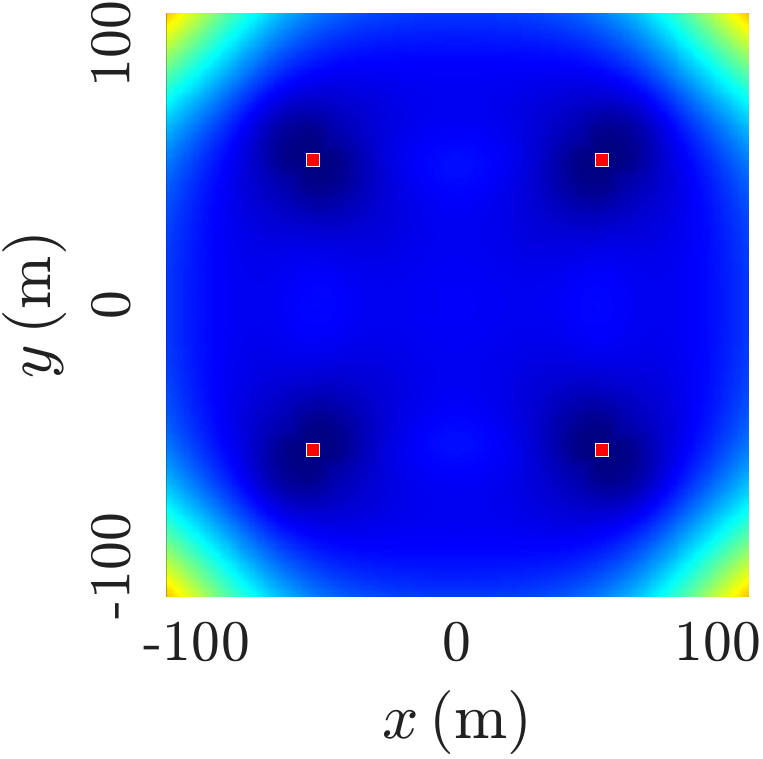}\label{fig.MXSPEB4}
    }
    \subfloat{
        \includegraphics[width=0.04\linewidth]{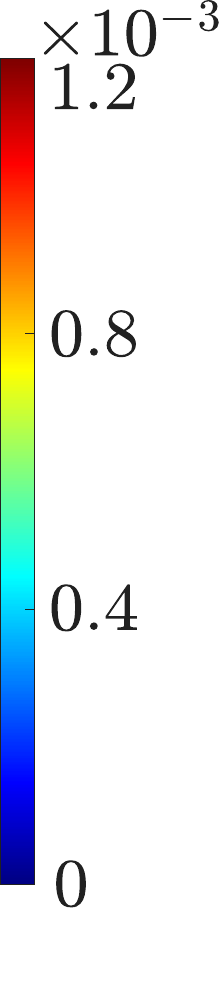}
    }\addtocounter{subfigure}{-1}
    \vspace{-1em}

    \subfloat[MMS VEB.]{
        \includegraphics[width=0.175\linewidth]{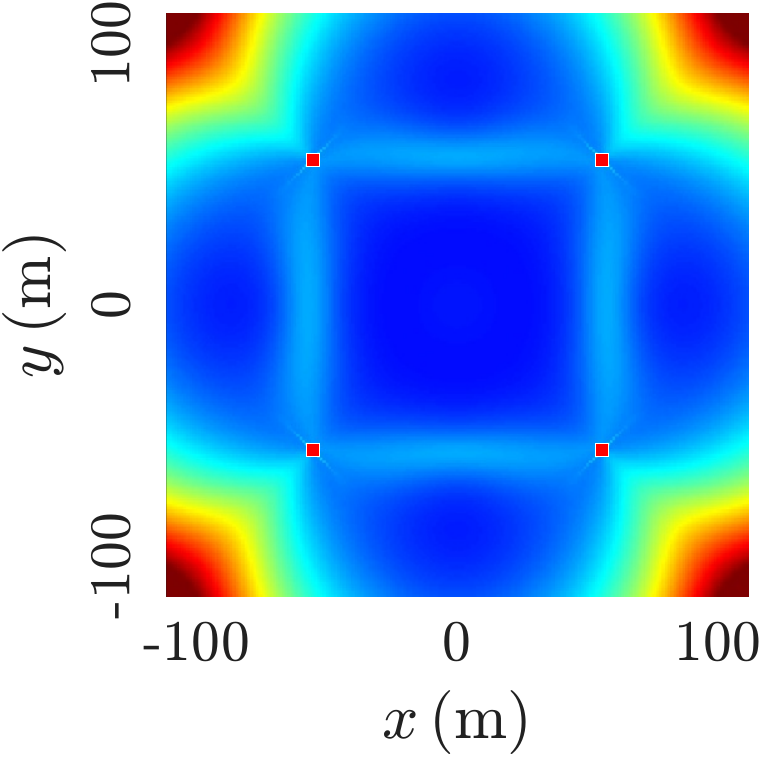}\label{fig.MMSVEB}
    }\hspace{-1mm}
    \subfloat[$1\times3$ MBS VEB.]{
        \includegraphics[width=0.175\linewidth]{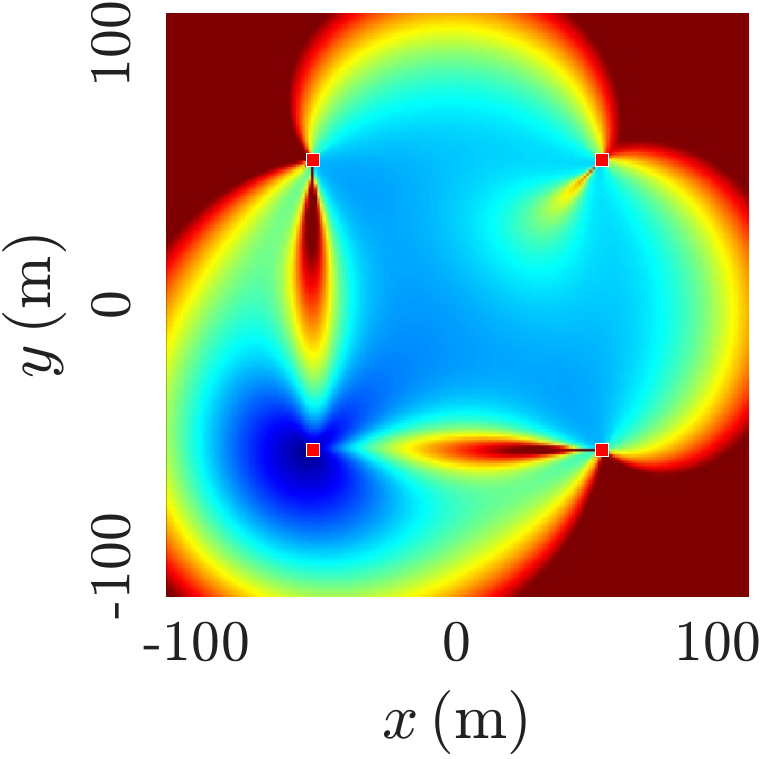}\label{fig.MBSVEB1}
    }\hspace{-1mm}
    \subfloat[$4\times3$ MBS VEB.]{
        \includegraphics[width=0.175\linewidth]{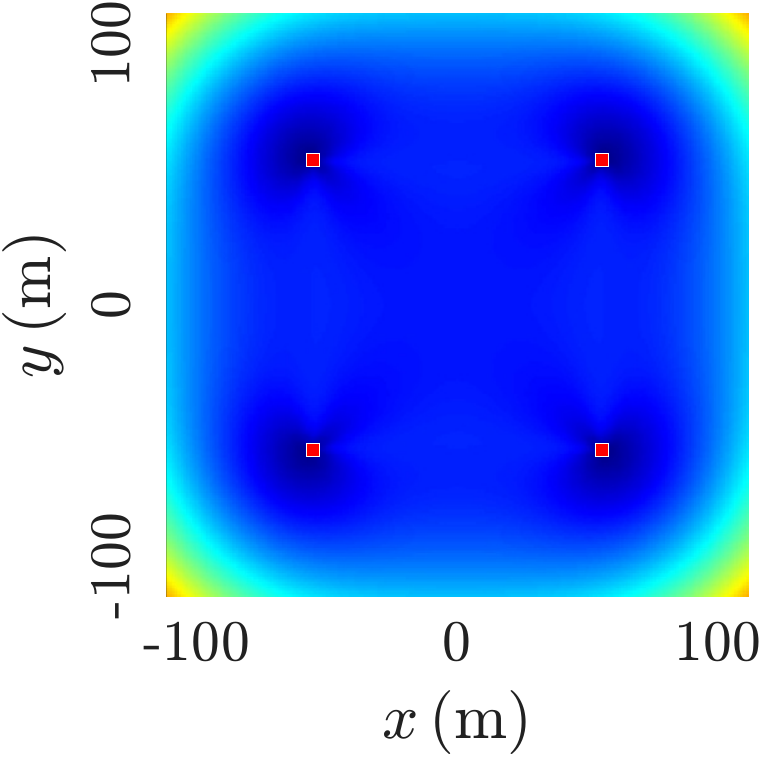}\label{fig.MBSVEB4}
    }\hspace{-1mm}
    \subfloat[$1\times4$ MXS VEB.]{
        \includegraphics[width=0.175\linewidth]{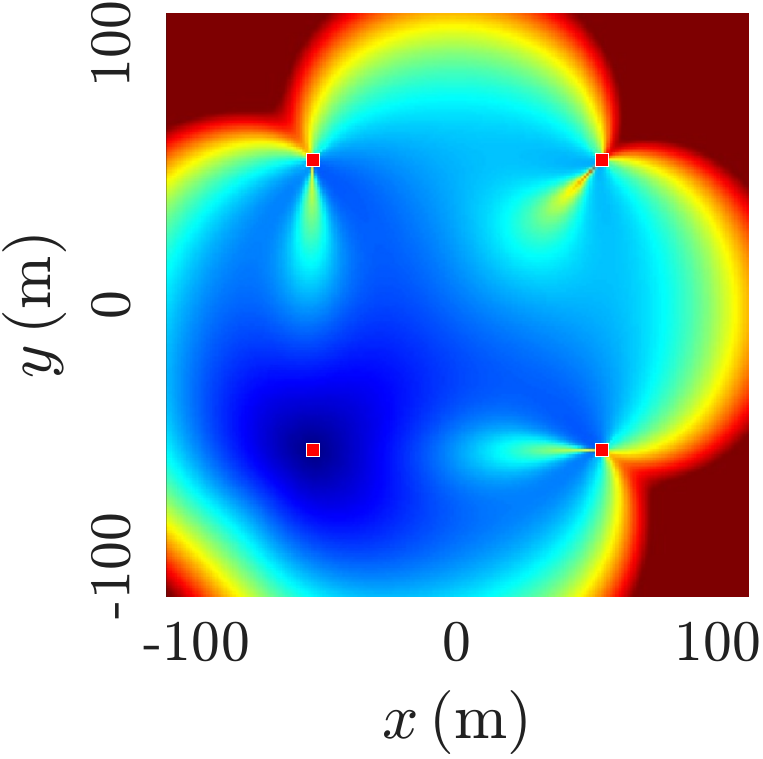}\label{fig.MXSVEB1}
    }\hspace{-1mm}
    \subfloat[$4\times4$ MXS VEB.]{
        \includegraphics[width=0.175\linewidth]{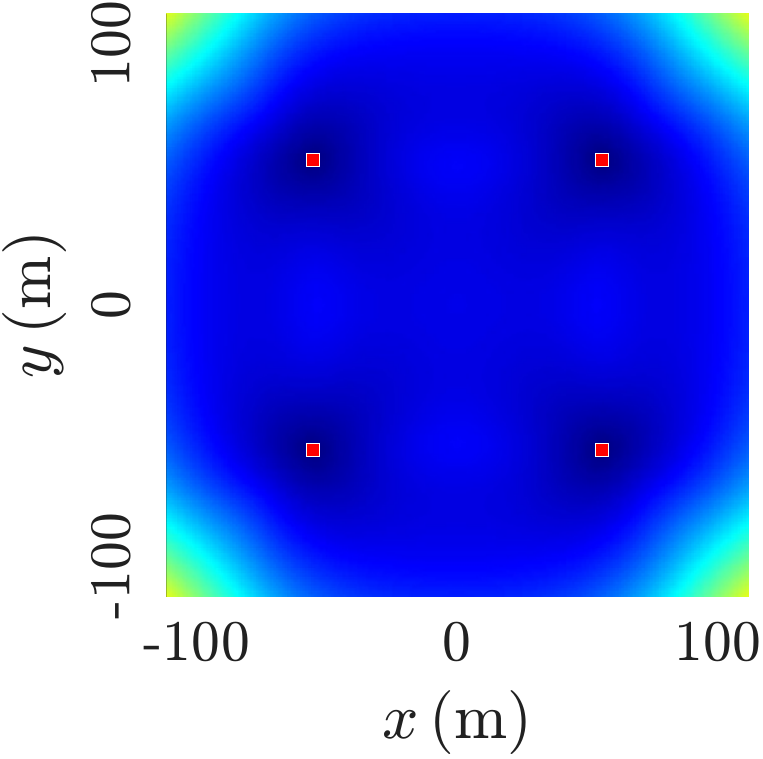}\label{fig.MXSVEB4}
    }
    \subfloat{
        \includegraphics[width=0.04\linewidth]{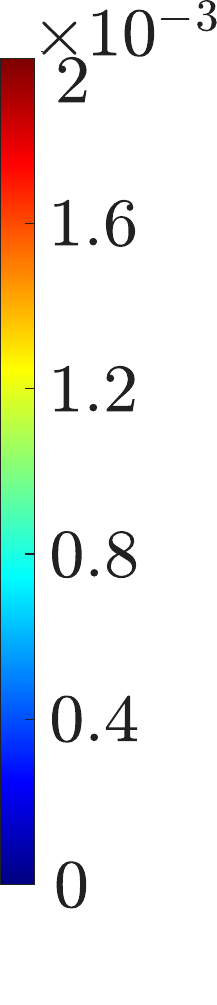}
    }
    \caption{PEB and VEB of different cooperative sensing types calculated by the full CRLB formula. The color denotes the value of PEB in m or VEB in m/s.}
    \label{fig.heatmap}
    \vspace{-1em}
\end{figure*}

In this section, the coverage and accuracy limits of the cooperative sensing network are evaluated via PEB and VEB. The heatmaps in Fig.~\ref{fig.heatmap} illustrate the geometrical characteristics of PEB (first row) and VEB (second row) of different sensing types, where for MBS and MXS, the cases of one transmitter ($1\times3$ and $1\times4$) and four transmitters ($4\times3$ and $4\times4$) are considered, while for MMS, the number of TRxs is always 4. The color bars on the right side indicate the PEB in meters and VEB in m/s on the heatmaps. The four sensing nodes are marked by red points. For MBS and MXS with 1 Tx, the best performance region lies around the Tx, as the overall path loss of all 3 (MBS) and 4 (MXS) links is lowest there. When the number of Txs grows to 4, the corresponding coverage becomes much better. MXS outperforms MBS due to the existence of the additional monostatic links, especially in the areas close to the nodes.

For MMS, the locations on the left and right sides ($\theta\approx\pm90^\circ$) of the ULAs usually have higher PEB due to the absence of angle information since $\cos^2(\pm90^\circ)=0$. Moreover, the PEB and VEB become larger along the edge of the square formed by the sensing nodes due to the geometric degradation: 
As given in (\ref{eq.efimmono}), in the MMS system, all measurement sensitivities are determined by the monostatic \ac{LoS} directions $\mathbf{u}$ (and $\mathbf{u}_\perp$). 
When a target is close to the edge between two nodes, e.g., nodes 1 and 2, these two nodes become the dominant monostatic radars due to lower path loss. However, they cannot provide sufficient direction diversity in $\mathbf{F}'_\text{P}$ and $\mathbf{F}'_\text{V}$ at this position since $\mathbf{u}_1\approx-\mathbf{u}_2$, the obtained position and velocity information mainly lie on one direction $\pm\mathbf{u}_1$. Therefore, the PEB and VEB increase. In contrast, the sensitive directions of bistatic links are jointly determined by $\mathbf{u}^\text{t}$ and $\mathbf{u}^\text{r}$. This provides additional spatial diversity and mitigates the edge-induced degradation. 

\begin{figure}
    \centering
    \subfloat[PEB, VEB vs. $P_\text{t}$.]{
        \includegraphics[width=0.47\linewidth]{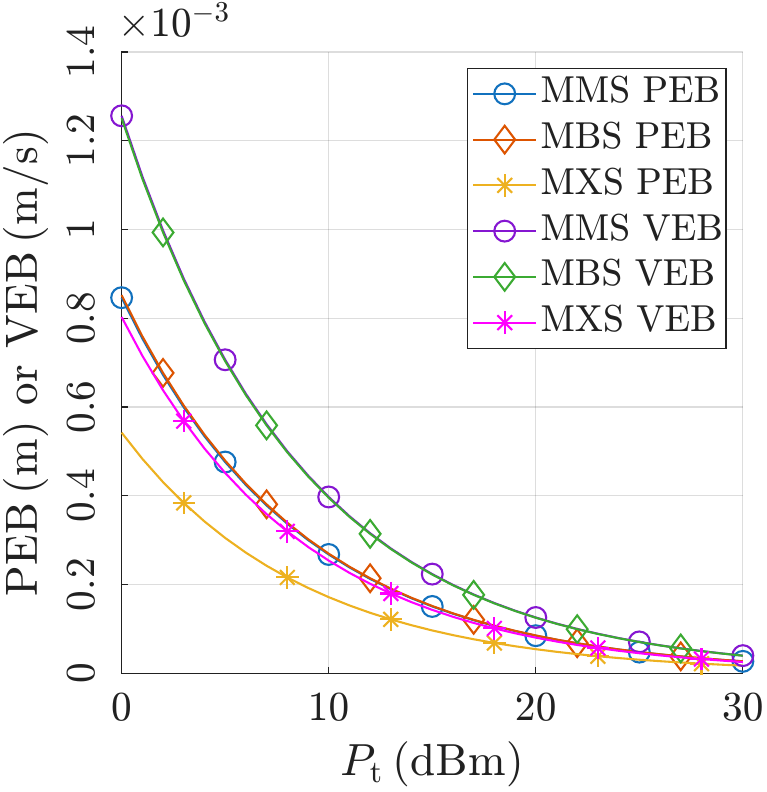}\label{fig.power}
    }\hspace{-1mm}
    \subfloat[PEB, VEB vs. $N_\text{r}$.]{
        \includegraphics[width=0.47\linewidth]{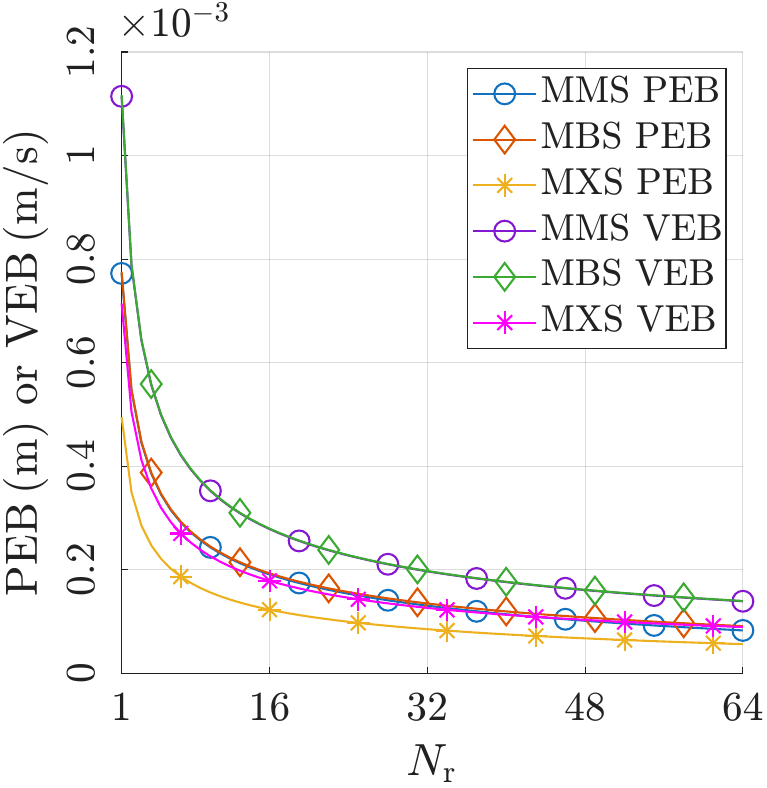}\label{fig.antenna}
    }

    \subfloat[PEB, VEB vs. $B$.]{
        \includegraphics[width=0.47\linewidth]{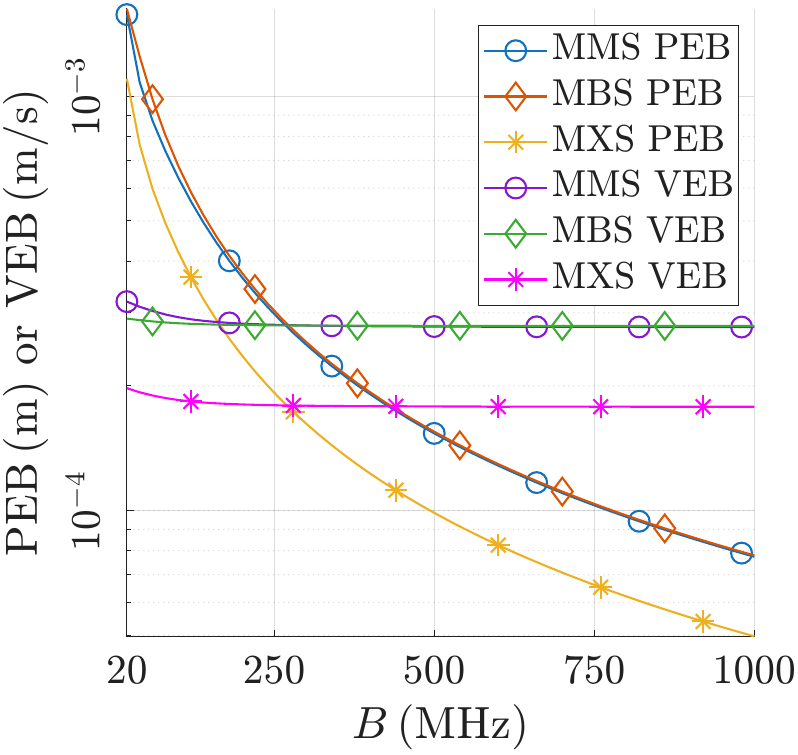}\label{fig.bw}
    }\hspace{-1mm}
    \subfloat[PEB, VEB vs. $T_\text{F}$.]{
        \includegraphics[width=0.47\linewidth]{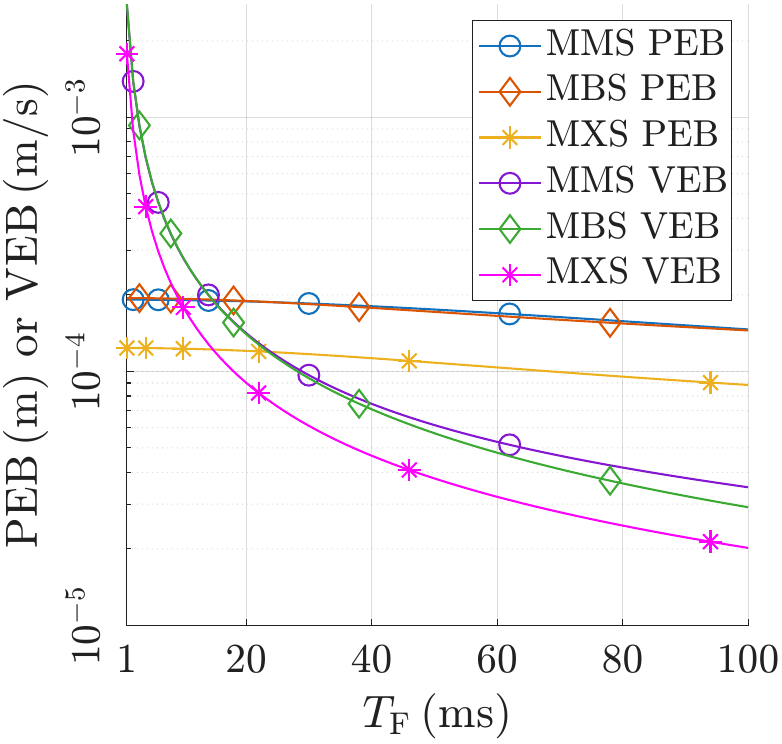}\label{fig.tf}
    }

    \subfloat[PEB, VEB vs. ISD.]{
        \includegraphics[width=0.47\linewidth]{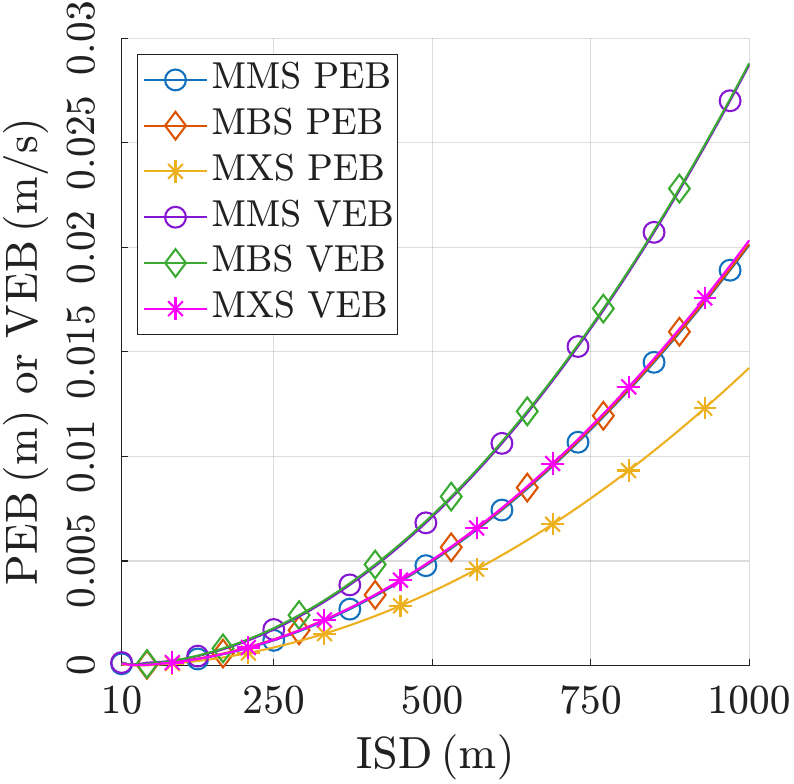}\label{fig.isd}
    }\hspace{-1mm}
    \subfloat[PEB, VEB vs. $|\mathbf{v}|$.]{
        \includegraphics[width=0.47\linewidth]{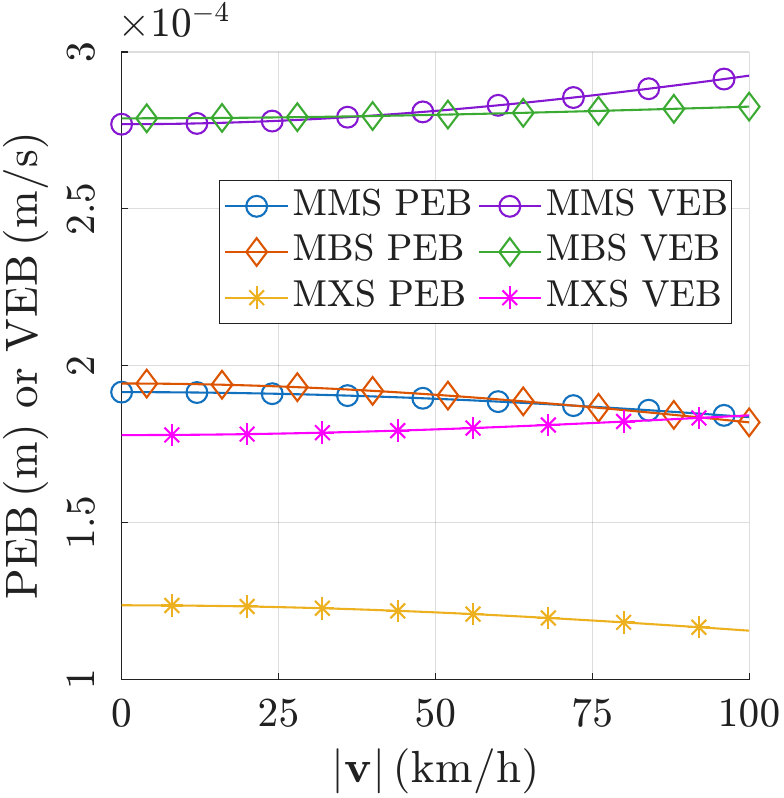}\label{fig.v}
    }
    \caption{PEB and VEB versus resource or system parameter curves. The results of MMS, $4\times3$ MBS, and $4\times4$ MXS are depicted.}
    \label{fig.performance}
\end{figure}

The influence of resource allocation and system configuration on PEB and VEB is described in Fig.~\ref{fig.performance}, where except for the parameter being tested, other parameters are set to the default values, and it is assumed that all the sensing nodes have the same resource allocation. The target is located at $\mathbf{p}=[25,15]^T$\,m with a velocity of $\mathbf{v}=[30,0]^T$\,km/h.
In addition, although ESNR $\gamma=BT_\text{F}\text{SNR}$ changes with bandwidth and frame length, we assume a fixed ESNR in Fig.~\ref{fig.performance}(c) and (d) in order to directly reflect the impact of changed resource amounts on PEB and VEB, while the result with changed ESNR/SNR can be found Fig.~\ref{fig.performance}(a).
It is obvious that the PEB and VEB decrease with the increasing transmission power and antenna size, since these resources are positively related to $E_r, E_{v_\text{r}}$, and $E_\theta$ in (\ref{eq.sige}).
Although bandwidth and frame length are mainly relevant to $C_r$ and $C_{v_\text{r}}$ and significantly impact the PEB and VEB, respectively, they also introduce minor mutual influences on VEB and PEB through the components of $\pmb{\Delta}_\text{P}$ and $\pmb{\Delta}_\text{V}$. 
The ISD changes PEB and VEB primarily due to the increased path loss. With increased velocity, the VEB increases since a higher velocity corresponds to a larger $\pmb{\Delta}_\text{P}$, while the PEB is slightly reduced since $\pmb{\Delta}_\text{V}\propto |\mathbf{v}|^2$. 
Additionally, it is illustrated that the PEB and VEB of MMS and $4\times3$ MBS are almost the same for the tested position, while MXS combines MMS and MBS and therefore benefits from the best achievable accuracy.

\subsection{Validation of the Simplified CRLB and the Proposed Criterion}\label{sec.dev}

\begin{figure*}
    \centering
    \subfloat[MMS PEB.]{
        \includegraphics[width=0.175\linewidth]{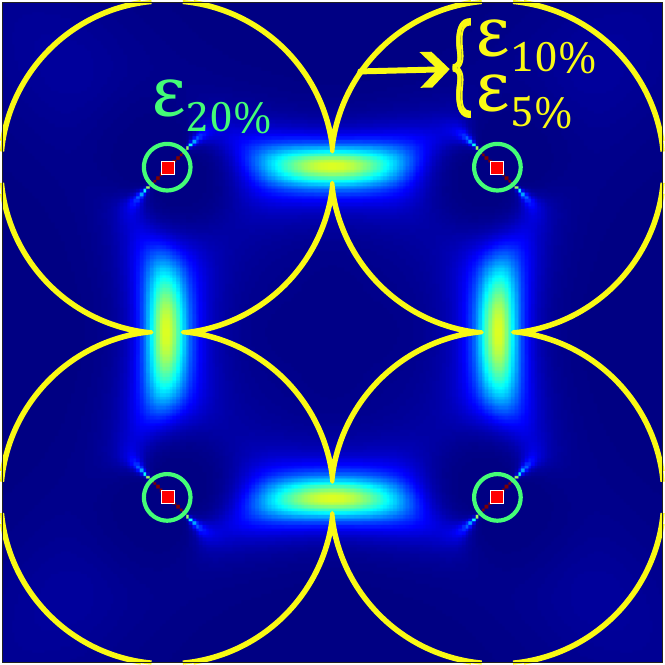}\label{fig.MP}
    }\hspace{-1mm}
    \subfloat[$1\times3$ MBS PEB.]{
        \includegraphics[width=0.175\linewidth]{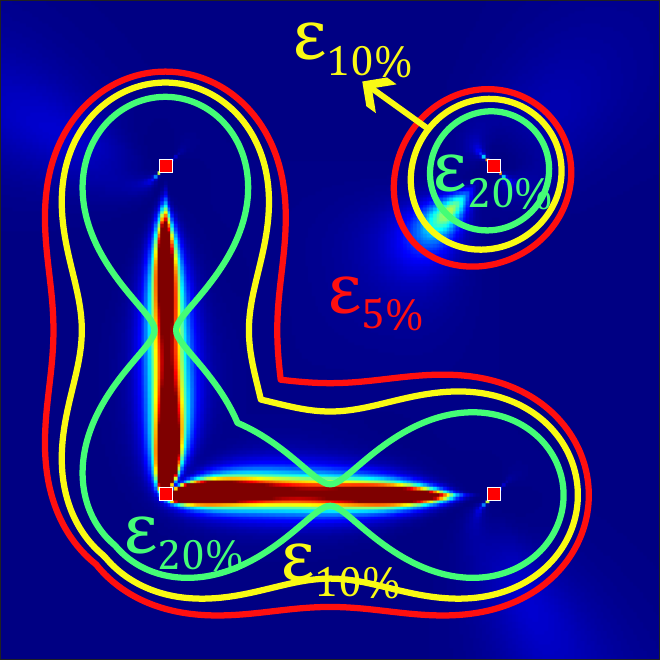}\label{fig.BP1}
    }\hspace{-1mm}
    \subfloat[$4\times3$ MBS PEB.]{
        \includegraphics[width=0.175\linewidth]{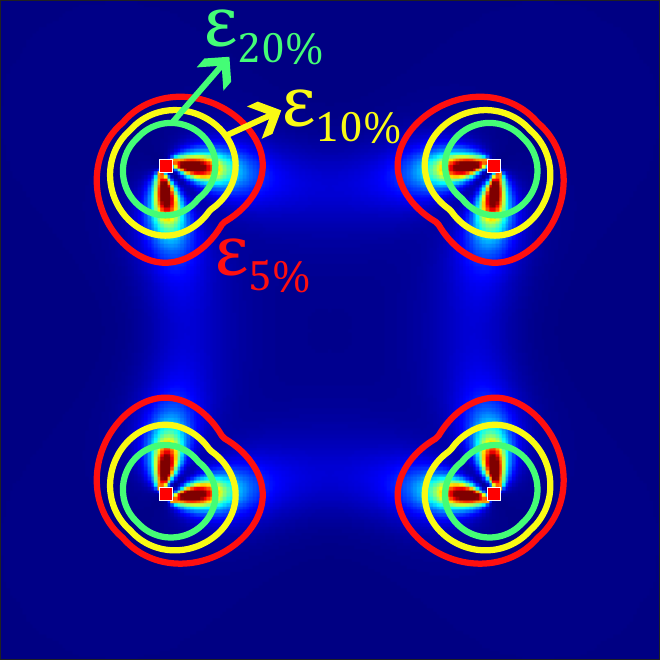}\label{fig.BP4}
    }\hspace{-1mm}
    \subfloat[$1\times4$ MXS PEB.]{
        \includegraphics[width=0.175\linewidth]{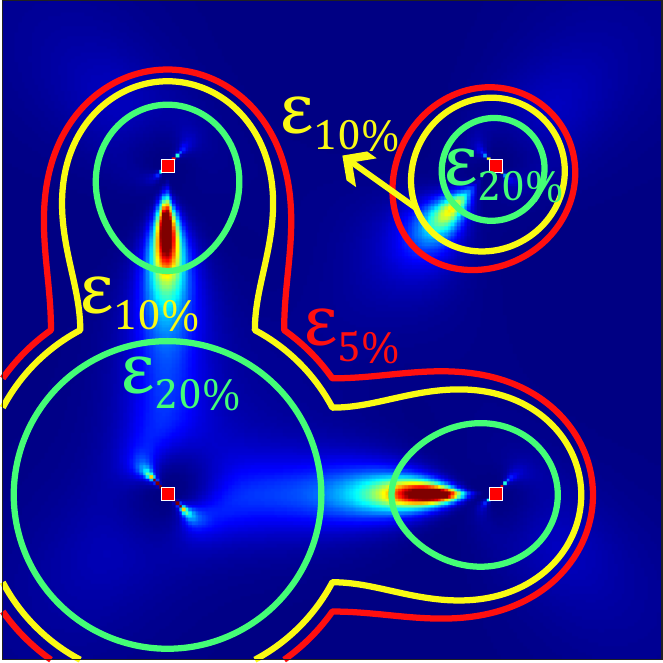}\label{fig.XP1}
    }\hspace{-1mm}
    \subfloat[$4\times4$ MXS PEB.]{
        \includegraphics[width=0.175\linewidth]{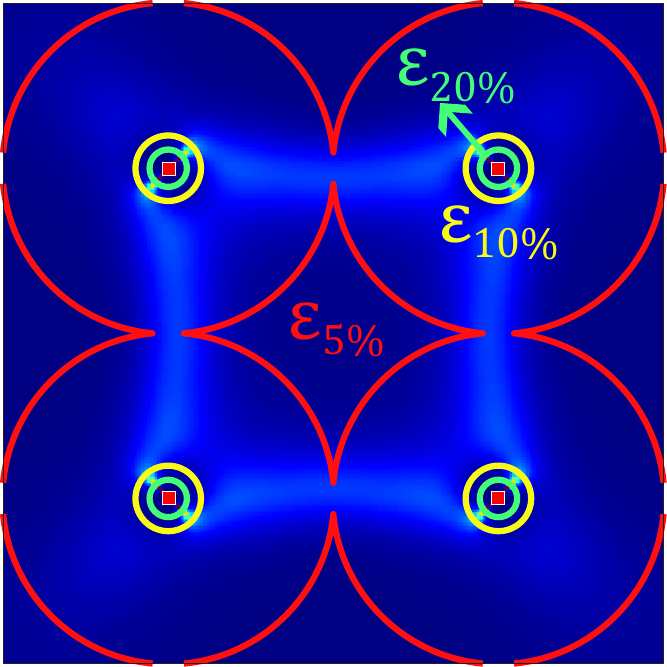}\label{fig.XP4}
    }
    \subfloat{
        \includegraphics[width=0.04\linewidth]{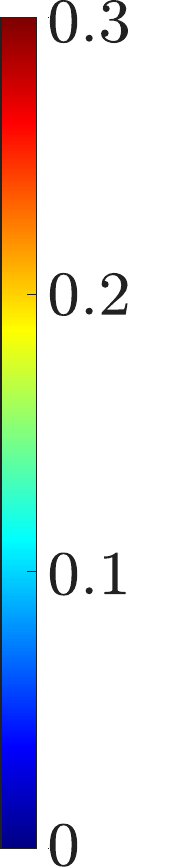}
    }\addtocounter{subfigure}{-1}
    \vspace{-1em}

    \subfloat[MMS VEB.]{
        \includegraphics[width=0.175\linewidth]{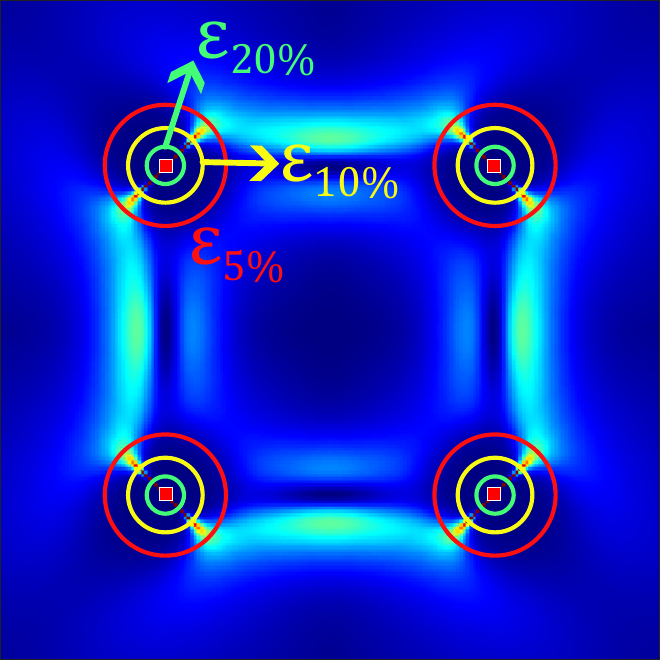}\label{fig.MV}
    }\hspace{-1mm}
    \subfloat[$1\times3$ MBS VEB.]{
        \includegraphics[width=0.175\linewidth]{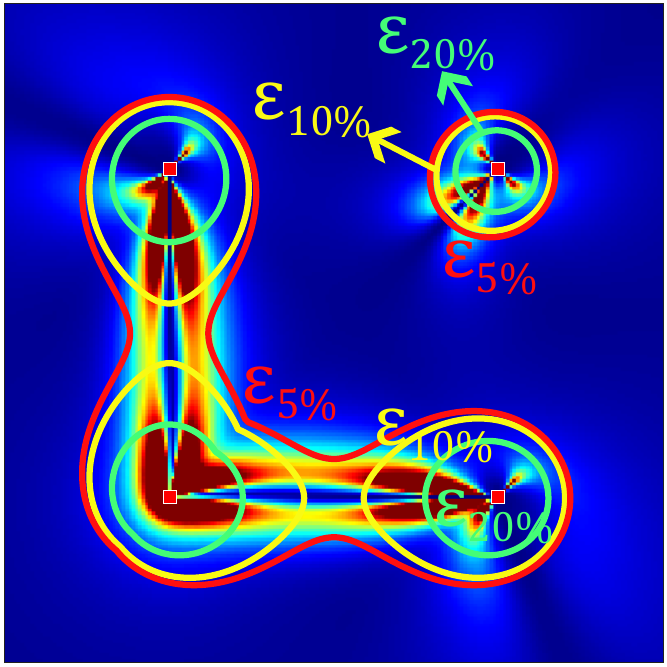}\label{fig.BV1}
    }\hspace{-1mm}
    \subfloat[$4\times3$ MBS VEB.]{
        \includegraphics[width=0.175\linewidth]{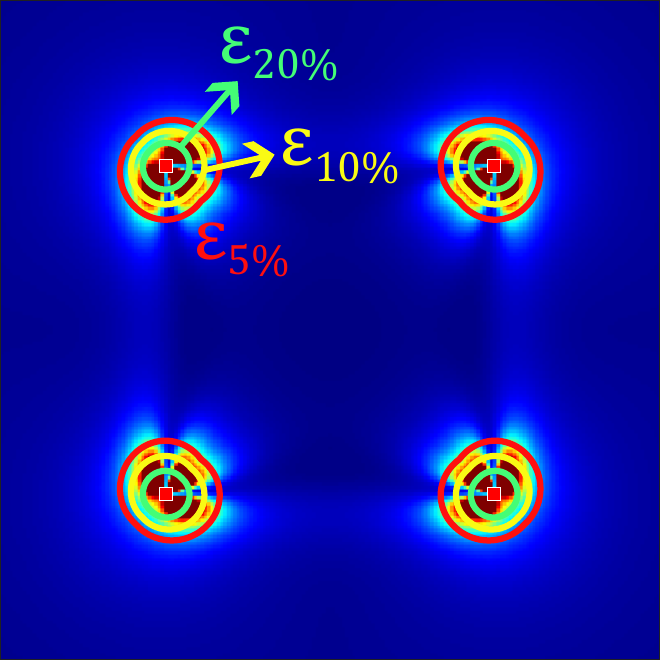}\label{fig.BV4}
    }\hspace{-1mm}
    \subfloat[$1\times4$ MXS VEB.]{
        \includegraphics[width=0.175\linewidth]{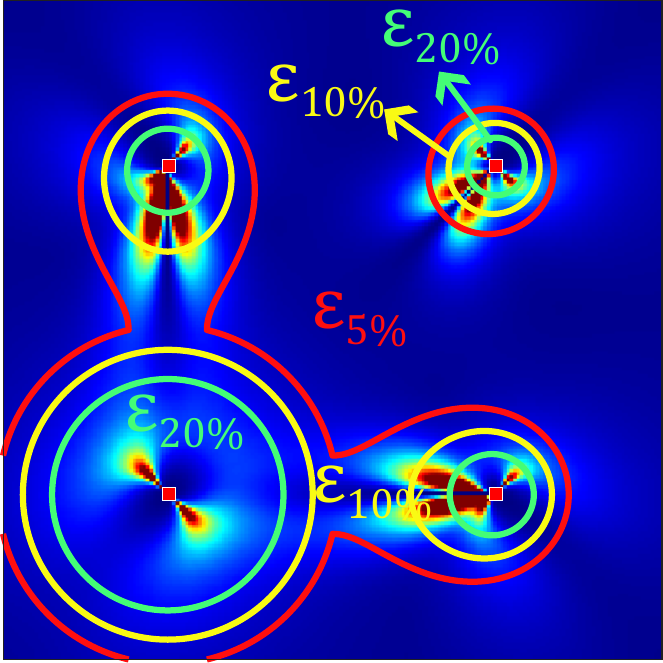}\label{fig.XV1}
    }\hspace{-1mm}
    \subfloat[$4\times4$ MXS VEB.]{
        \includegraphics[width=0.175\linewidth]{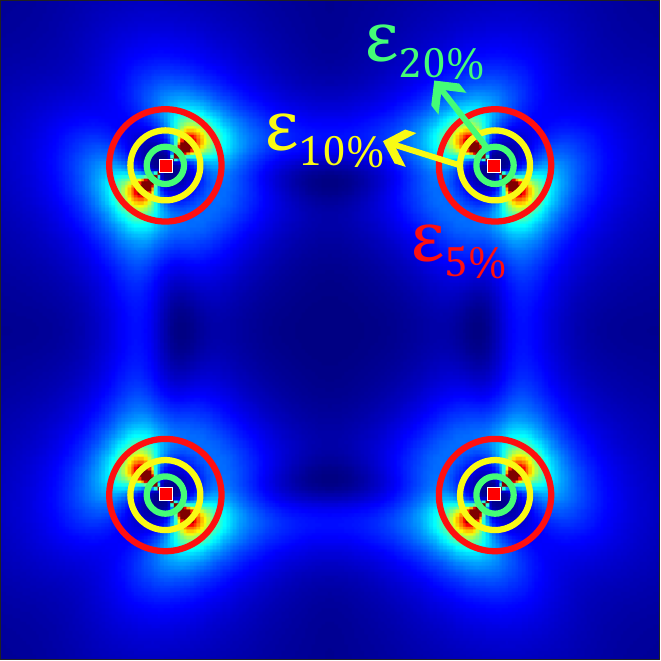}\label{fig.XV4}
    }
    \subfloat{
        \includegraphics[width=0.04\linewidth]{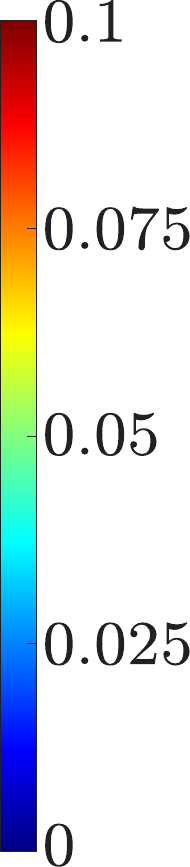}
    }
    \caption{$\delta_\text{P}$ and $\delta_\text{V}$ of different cooperative sensing modes. The heatmap color denotes the NMAE. The contours denote the positions corresponding to $\varepsilon_{20\%}$, $\varepsilon_{10\%}$, and $\varepsilon_{5\%}$. The coordinate configuration is identical to Fig.~\ref{fig.heatmap} and thus is omitted for better visibility.}
    \label{fig.heatmap1}
    \vspace{-1em}
\end{figure*}

\begin{table}[t]
\centering
\caption{$\varepsilon_{20\%}$, $\varepsilon_{10\%}$, and $\varepsilon_{5\%}$ of different sensing modes. $r_{20\%},r_{10\%},r_{5\%}$ in meters are the corresponding ranges $r$ or $\sqrt{r^\mathrm{t}r^\mathrm{r}}$.}
\begin{tabular}{c|c|c|c|c|c|c|c}
\hline
 & & $\varepsilon_{20\%}$ & $r_{20\%}$ & $\varepsilon_{10\%}$ & $r_{10\%}$ & $\varepsilon_{5\%}$ & $r_{5\%}$\\
\hline
\multirow{2}{*}{MMS}
& P  & 1.21 & 7.06 & 8.61 & 50.23 & 8.68 & 50.63\\
& V & 0.97 & 5.66 & 1.94 & 11.32 & 3.15 & 18.38 \\
\hline
\multirow{2}{*}{$1\negthinspace\times\negthinspace3$ MBS}
& P & 8.59 & 50.10 & 9.60 & 56.06 & 10.36 & 60.43 \\
& V & 7.14 & 41.65 & 8.43 & 49.18 & 8.81 & 51.39 \\
\hline
\multirow{2}{*}{$4\negthinspace\times\negthinspace3$ MBS}
& P & 6.15 & 35.87 & 7.02 & 40.95 & 7.82 & 45.62 \\
& V & 4.47 & 26.07 & 5.54 & 32.31 & 6.36 & 37.10 \\
\hline
\multirow{2}{*}{$1\negthinspace\times\negthinspace4$ MXS}
& P & 8.00 & 46.66 & 9.70 & 56.58 & 10.50 & 61.25 \\
& V & 6.02 & 35.12 & 7.54 & 43.98 & 8.81 & 51.39 \\
\hline
\multirow{2}{*}{$4\negthinspace\times\negthinspace4$ MXS}
& P & 0.97 & 5.66 & 1.70 & 9.92 & 8.61 & 50.23\\
& V & 0.97 & 5.66 & 1.82 & 10.62 & 2.92 & 17.03 \\
\hline
\end{tabular}
\label{tab.eps}
\end{table}

This section evaluates the error of the simplified CRLB and its relationship with $\varepsilon=\sqrt{1/\epsilon}$ in (\ref{eq.crinode}) and (\ref{eq.crinodeb}), where $\varepsilon$ equals $\min_{m} \Bar{r}_m/\Bar{v}$ for MMS and $\min_{m}\sqrt{\Bar{r}_m^\text{t}\Bar{r}_m^\text{r}}/\Bar{v}$ for MBS, directly reflecting the ratio between range and velocity. 
The default parameters are the same as previously defined. The \ac{NMAE} is employed to express the error produced by the simplification, which calculates the ratio between the simplification-induced error and the full PEB and VEB:
\begin{align}
    \delta_\text{P}=\frac{|\text{PEB}-{\text{PEB}_0}|}{\text{PEB}},\quad \delta_\text{V}=\frac{|\text{VEB}-{\text{VEB}_0}|}{\text{VEB}},
\end{align}
where $\text{PEB}_0$ and $\text{VEB}_0$ are the simplified results based on $\mathbf{F}_{\text{P}0}$ and $\mathbf{F}_{\text{V}0}$. We first analyze the NMAEs geometrically via the heatmap. Since the target (e.g., vehicle) may drive in arbitrary directions, we evaluate the results for velocity directions varied from 0 to $2\pi$ with a step size of $\pi/12$, and the maximum NMAE at each position is extracted and shown in the heatmap in Fig.~\ref{fig.heatmap1}. Further, we depict three contours corresponding to $\varepsilon_{20\%}$, $\varepsilon_{10\%}$, and $\varepsilon_{5\%}$ on each heatmap to check the validity and tightness of the proposed criterion in \textbf{Theorem~\ref{t.1}}, where $\varepsilon_{X\%}$ corresponds to the sufficient (but not necessary) condition boundary for $\delta_\text{P}$ or $\delta_\text{V}<X\%$, i.e., the NMAE outside this contour is guaranteed to be below $X\%$, while the error inside this boundary may be either below or above $X\%$:
\begin{align}
    \varepsilon_{X\%}^\text{P}= \max_{\delta_\text{P}} \min_m \Bar{r}_m/\Bar{v},\ \  \varepsilon_{X\%}^\text{V}= \max_{\delta_\text{V}} \min_m \Bar{r}_m/\Bar{v}.
\end{align}
The result is depicted in Fig.~\ref{fig.heatmap1}, where the map color represents the NMAE, while the contours of $\varepsilon_{20\%}$, $\varepsilon_{10\%}$, and $\varepsilon_{5\%}$ are marked in green, yellow, and red, whose values are listed in Table~\ref{tab.eps}, where the corresponding ranges $r_{20\%},r_{10\%}, \text{ and } r_{5\%}$ are also recorded for reference.

It is clear that $\delta_\text{P}$ and $\delta_\text{V}$ are sufficiently low (under 10\% or 5\% or even much less) for most positions, particularly for those inside the square formed by the nodes, which generally corresponds to the cooperative sensing region. This demonstrates the applicability of the simplified CRLB.
However, besides the location near sensing nodes, the NMAEs on the edges are more or less higher than other positions, especially for MMS and single-Tx MBS. This phenomenon is produced by geometric degradation: 
As explained in Section~\ref{sec.perf}, when a target is close to the edge between nodes 1 and 2, the dominant range and velocity information $\mathbf{F}_{\text{P}0}$ and $\mathbf{F}_{\text{V}0}$ primarily lie in the direction of $\mathbf{u}_1\approx-\mathbf{u}_2$. In contrast, $\pmb{\Delta}_\text{V}$ provides information in the perpendicular direction $\mathbf{u}_{\perp,1}\approx-\mathbf{u}_{\perp,2}$, thus ignoring its influence implies abandoning the information in the second dimension, resulting in increased NMAE. From another aspect, the condition $\epsilon \mathbf{F}_{\mathrm{P}0}-\mathbf{T}_\mathrm{V}\succeq0$ in \textbf{Proposition~\ref{po.1}} may not hold at edge since $\mathbf{x}^T(\epsilon \mathbf{F}_{\mathrm{P}0}-\mathbf{T}_\mathrm{V})\mathbf{x}$ could be less than 0 when $\mathbf{x}=\pm \mathbf{u}_{\perp,1}$, violating the proposed sufficient condition.
Although angle measurement can mitigate this effect, as analyzed in Section~\ref{sec.simpmms}, the horizontal antenna size of $16$ in the simulation is too small to completely mitigate the loss due to the ignored $\mathbf{T}_\text{V}$.

For MMS, besides at the edges, the high NMAEs mainly exist near the nodes on the left and right sides of the ULAs, where the angle information of the dominant node is nearly zero. Since at these positions, the angle information matrix and $\mathbf{T}_\text{V}$ are approximately proportional to $1/(r_m^\text{r})^2$ of the dominant node, while the range information is irrelevant to it, the disappearance of angle information makes the criterion of \textbf{Theorem~\ref{t.1}} much tighter, implying a high NMAE.
On the contrary, multi-Tx MBS and MXS can mitigate the significantly increased NMAE at the middle of edges by the increased spatial diversity, while the NMAE is high at the edge endpoints since $\mathbf{g}_m={v_{\perp m}^\text{t}\mathbf{u}_{\perp m}^\text{t}}/{r^\text{t}_m}+{v_{\perp m}^\text{r}\mathbf{u}_{\perp m}^\text{r}}/{r^\text{r}_m}$ grows extremely for positions near the Tx (small $r^\text{t}_m$) or Rx (small $r^\text{t}_m$), the validity of $\epsilon\mathbf{F}_{\mathrm{P}0}-\mathbf{T}_\mathrm{V}\succeq0$ is challenged. 

The tightness of the proposed criterion is also examined. The results show favorable tightness, especially for MBS and the VEB of MXS, where the contours well enclose the high-NMAE regions near the sensing nodes and along the edges. 
In contrast, for MMS, although the contours of $\delta_\text{V}$ and the $\varepsilon_{20\%}$ contour of $\delta_\text{P}$ capture the high NMAE near the nodes in a good manner, the $\varepsilon_{10\%}$ and $\varepsilon_{5\%}$ $\delta_\text{P}$ contours show a broken tightness.
The reason is that the contour corresponding to the proposed criterion for MBS in (\ref{eq.crinodeb}) results in a composite Cassini oval induced by multiple bistatic links, which captures the high NMAE along the node connection lines (especially for the edge endpoints). However, the criterion for MMS in (\ref{eq.crinode}) results in a complement of the union of node-centered circles, 
while the high NMAE at the middle of edges extremely increases the radius of circles. 
\textbf{Theorem~\ref{t.1}} is based on sufficient spatial diversity and approximately isotropic FIM, which is not well-held at the edges, especially for MMS.
Nevertheless, the low-NMAE center area inside the square is always indicated by \textbf{Theorem~\ref{t.1}} regardless of the sensing types, since the underlying assumption holds for the center area.

\begin{figure}
    \centering
    \subfloat[$\delta_\text{P}, \delta_\text{V}$ vs. $P_\text{t}$.]{
        \includegraphics[width=0.47\linewidth]{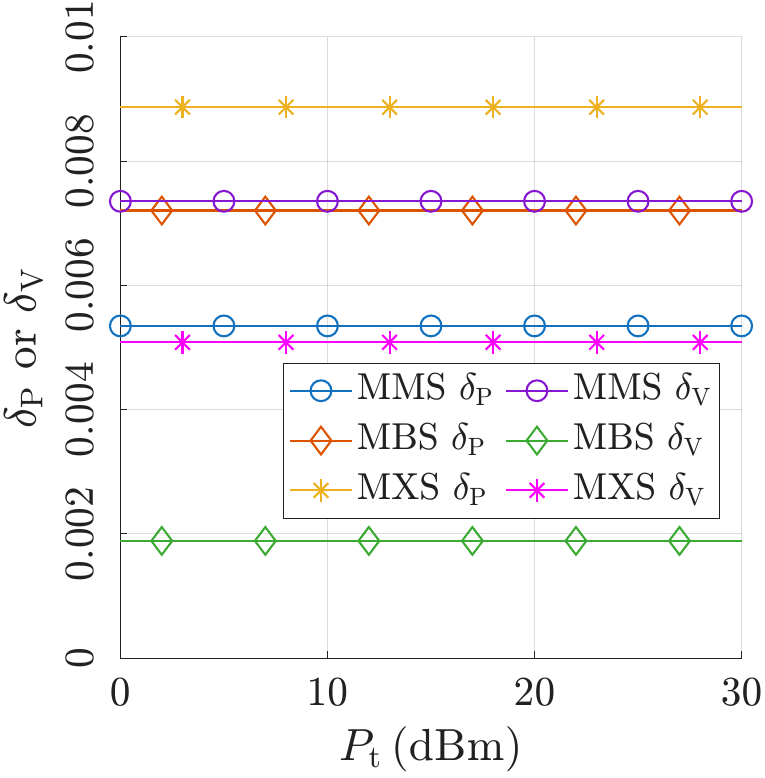}\label{fig.power_d}
    }\hspace{-1mm}
    \subfloat[$\delta_\text{P}, \delta_\text{V}$ vs. $N_\text{r}$.]{
        \includegraphics[width=0.47\linewidth]{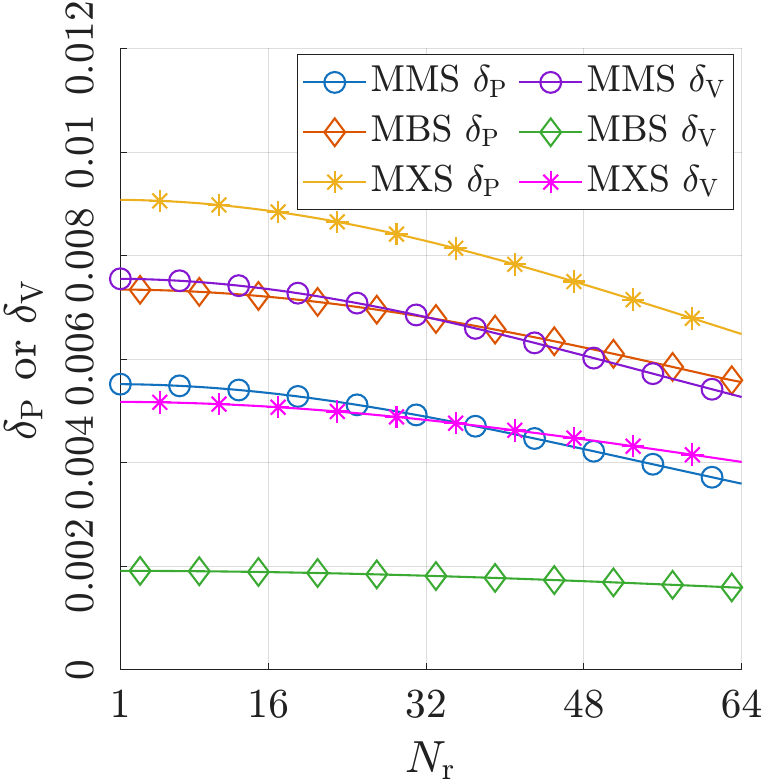}\label{fig.power_ant}
    }\vspace{-0.5em}

    \subfloat[$\delta_\text{P}, \delta_\text{V}$ vs. $B$.]{
        \includegraphics[width=0.47\linewidth]{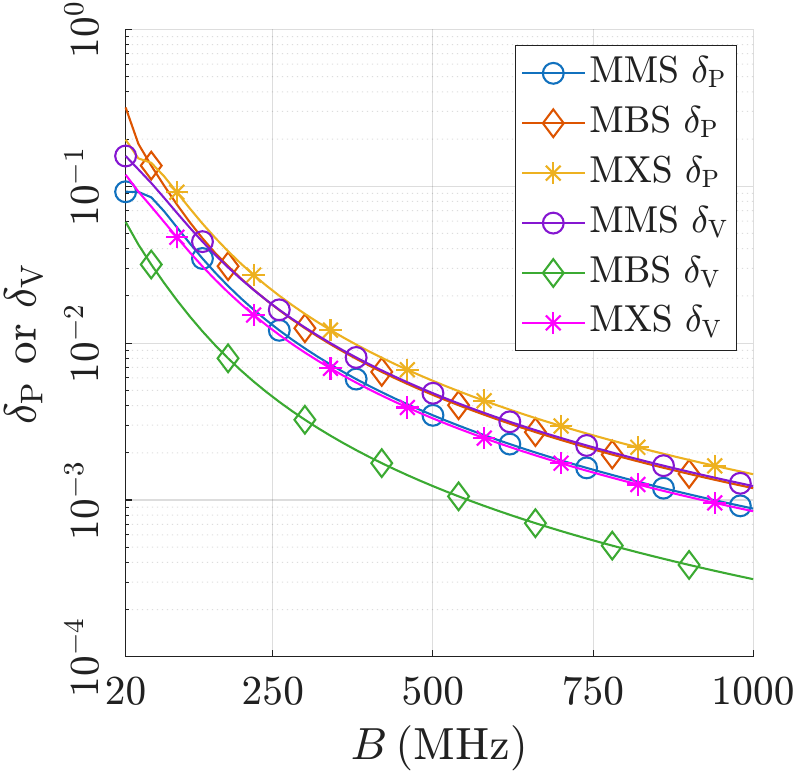}\label{fig.power_bw}
    }\hspace{-1mm}
    \subfloat[$\delta_\text{P}, \delta_\text{V}$ vs. $T_\text{F}$.]{
        \includegraphics[width=0.47\linewidth]{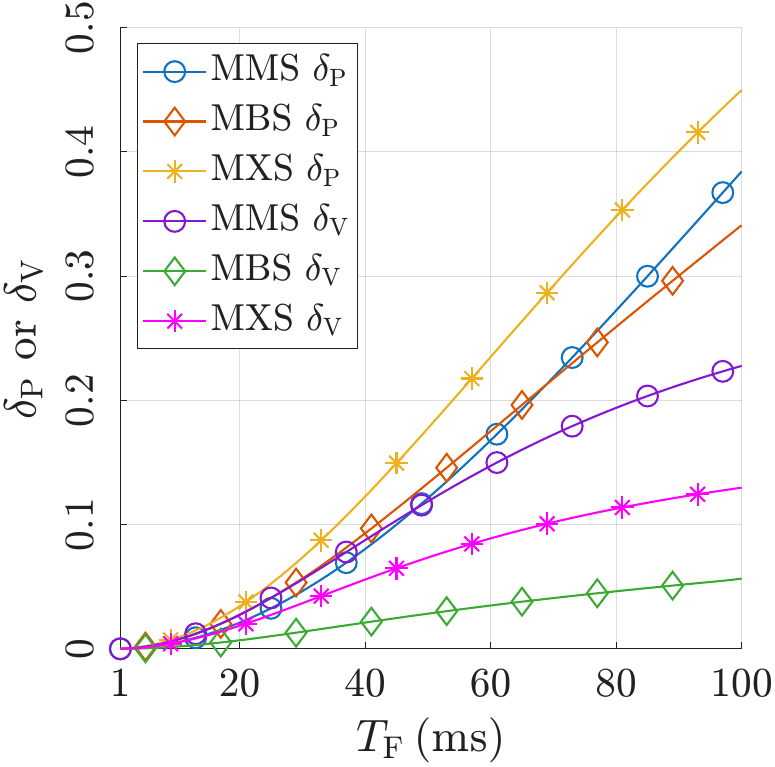}\label{fig.power_tf}
    }\vspace{-0.5em}

    \subfloat[$\delta_\text{P}, \delta_\text{V}$ vs. ISD.]{
        \includegraphics[width=0.47\linewidth]{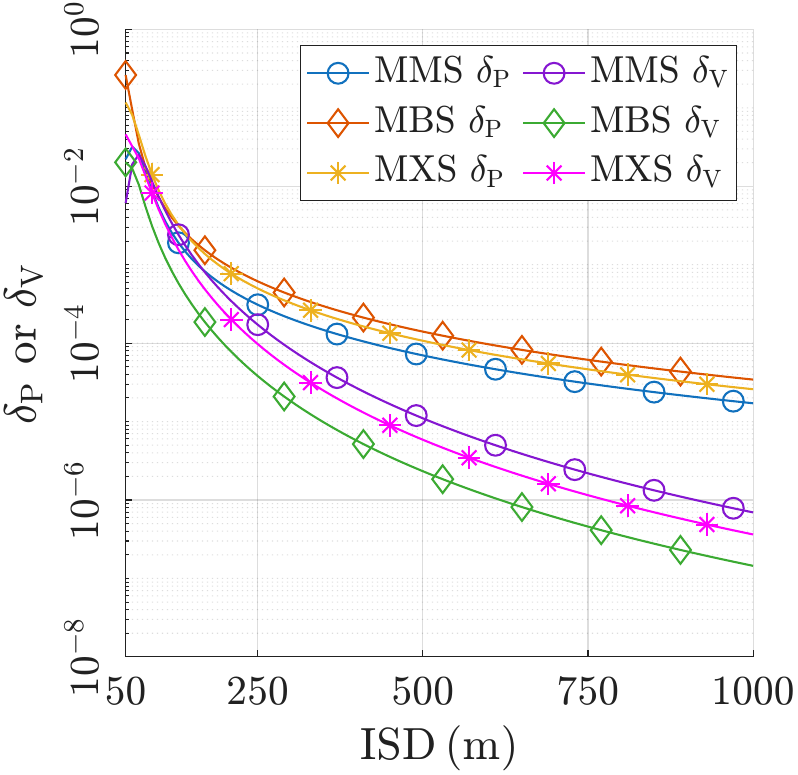}\label{fig.power_isd}
    }\hspace{-1mm}
    \subfloat[$\delta_\text{P}, \delta_\text{V}$ vs. $|\mathbf{v}|$.]{
        \includegraphics[width=0.47\linewidth]{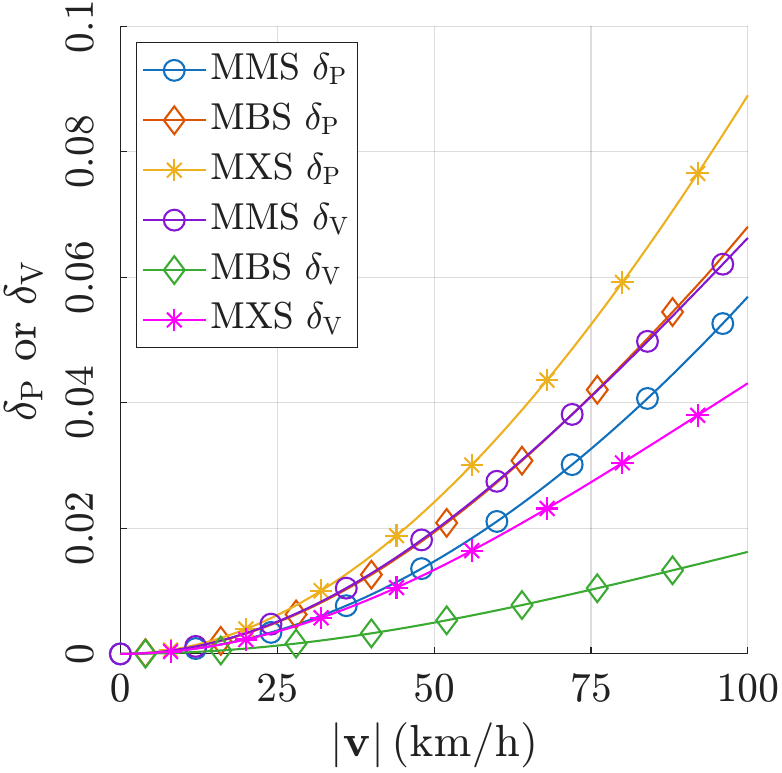}\label{fig.power_v}
    }
    \caption{$\delta_\text{P}\text{ and } \delta_\text{V}$ versus resource or system parameter curves. The results of MMS, $4\times3$ MBS, and $4\times4$ MXS are depicted.}
    \label{fig.dev}
    \vspace{-1em}
\end{figure}

The impact of resource and system parameters on $\delta_\text{P}\text{ and } \delta_\text{V}$ is illustrated in Fig.~\ref{fig.dev}, where the target is located at $\mathbf{p}=[25,15]^T$\,m with a velocity of $|\mathbf{v}|=30$\,km/h. Again, the moving direction varies from 0 to $2\pi$, and the maximum NMAE over all velocity directions is extracted. The transmission power does not influence the NMAEs since both full and simplified PEBs and VEBs are inversely proportional to $\sqrt{\gamma}$. A similar conclusion also applies to $N_\text{r}$, where $E_r,E_{v_\text{r}}\propto N_\text{r}$, while the AoA-related term $E_\theta\propto N_\text{r}^3$ generally has much less contribution compared to others, as explained in Section~\ref{sec.simpmms}. Therefore, the NMAEs only change slightly with $N_\text{r}$. The increased bandwidth reduces the gap between the full and simplified CRLBs by increasing the relative contribution of $\mathbf{F}_{\text{P}0}$ in $\mathbf{F}_\text{P}'$ and decreasing the relative contribution of $\pmb{\Delta}_\text{P}$ in $\mathbf{F}_\text{V}'$, while the impact of frame length is opposite. The ISD and target velocity $|\mathbf{v}|$ are proportional to $\Bar{r}$ (or $\Bar{r}^\text{t}$ and $\Bar{r}^\text{r}$) and $\Bar{v}$, respectively strengthening and weakening the validity of the criterion in \textbf{Theorem~\ref{t.1}}.


\section{Analysis and Discussion}\label{sec.dis}
The simulations in Section~\ref{sec.perf} investigates the performance of the network-based cooperative sensing system via the full CRLB. The results show that from the perspective of accuracy and coverage, the sensing types in Fig.~\ref{fig.heatmap} can be ranked as 
\begin{align}
    4\negthinspace\times\negthinspace4\,\text{MXS}>4\negthinspace\times\negthinspace3\,\text{MBS}>4\,\text{MMS}>1\negthinspace\times\negthinspace4\,\text{MXS}>1\negthinspace\times\negthinspace3\,\text{MBS},\negthinspace\nonumber
\end{align}
implying that more measurements and better spatial diversity increase sensing performance.

Afterward, the validity of the simplified CRLB is examined in Section~\ref{sec.dev}, where the tightness of the proposed criterion for the applicability of the simplified CRLB is also investigated. The results show that the simplified CRLB can be safely adopted for the square enclosed by the sensing nodes, while the positions near the square edge and the nodes generally suffer from a relatively high simplification error.
The proposed criterion is particularly tight for bistatic sensing links since the high NMAEs near the edges and nodes are captured, while monostatic links can degrade the tightness. However, for all the sensing types, the minor simplification error in the center area can always be indicated by the proposed criterion.

Moreover, the impact of resource allocation and system parameters on the full CRLBs and the validity of the simplified CRLBs is evaluated. Their effects are explained in the following with reference to (\ref{eq.efimmono}) and (\ref{eq.efimbi}):
\begin{itemize}
    \item Power $P_\text{t}$ or ESNR $\gamma$: Since the signal-level EFIM components $E_r,E_{v_\text{r}},E_\theta\propto P_\text{t}, \gamma$, increasing $P_\text{t}$ or $\gamma$ can lead to simultaneously and proportionally enlarged full EFIMs $\mathbf{F}_\text{P}',\mathbf{F}_\text{V}'$ and simplified ones $\mathbf{F}_{\text{P}0},\mathbf{F}_{\text{V}0}$, as well as the coupling terms $\pmb{\Delta}_\text{P},\pmb{\Delta}_\text{V}$. Therefore, the PEB and VEB are reduced, while $\delta_\text{P}$ and $\delta_\text{V}$ are not changed.
    \item Receiver antenna horizontal size $N_\text{r}$: $E_r,E_{v_\text{r}}\propto N_\text{r}$, $E_\theta\propto N_\text{r}^3$. However, as analyzed in Section~\ref{sec.simpmms}, the contribution of $E_\theta$ is much less than $E_r$ and $E_{v_\text{r}}$ for the considered system configuration. In this case, the impact of $N_\text{r}$ is generally similar to that of ESNR, and the influence on $\delta_\text{P}$ and $\delta_\text{V}$ is tiny.
    \item Bandwidth $B$: Since $E_r\propto B^2$, the full and simplified PEBs, as well as $\pmb{\Delta}_\text{P}$, are inversely related to $B$. Consequently, the VEB also decreases slightly as $B$ increases. A wider bandwidth results in a smaller proportion of $\pmb{\Delta}_\text{V}$ in $\mathbf{F}_\text{P}'$ and $\pmb{\Delta}_\text{P}$ in $\mathbf{F}_\text{V}'$, hence reducing $\delta_\text{P}$ and $\delta_\text{V}$.
    \item Frame length $T_\text{F}$: Similar to $B$, a longer $T_\text{F}$ suppresses full and simplified VEBs and slightly reduces the full PEB through $\pmb{\Delta}_\text{V}$. Nevertheless, since $E_{v_\text{r}},\pmb{\Delta}_\text{V}\propto T_\text{F}^2$, the gap between the full and simplified PEBs increases with $T_\text{F}$. The same conclusion applies to the gap in VEBs since $\mathbf{F}_\text{V}'$ can be written in the form of $\mathbf{F}_\text{V}'=T_\text{F}^2\mathbf{A}-{||\mathbf{v}||^2T_\text{F}^4\mathbf{B}}({||\mathbf{v}||^2T_\text{F}^2\mathbf{D}+\mathbf{G}})^{-1}$, where $\mathbf{A},\mathbf{B},\mathbf{D}$, and $\mathbf{G}$ are the $||\mathbf{v}||,T_\text{F}$-irrelevant matrices. The relative contribution of $\pmb{\Delta}_\text{P}={||\mathbf{v}||^2T_\text{F}^4\mathbf{B}}({||\mathbf{v}||^2T_\text{F}^2\mathbf{D}+\mathbf{G}})^{-1}$ in $\mathbf{F}_\text{V}'$ increases with $T_\text{F}$, resulting in a larger approximation loss between the simplified and full VEBs.
    \item ISD: For a given position in the square formed by the four sensing nodes, since $\gamma\propto 1/(r^\text{t}r^\text{r})^2$ and $r^\text{t},r^\text{r}\propto \text{ISD}$, the PEB and VEB increase with ISD. Meanwhile, ${\Bar{r}^\text{t}\Bar{r}^\text{r}}/\Bar{v}^2$ increases, strengthening the criteria in (\ref{eq.crinode}) and (\ref{eq.crinodeb}), implying a decreasing trend of $\delta_\text{P}$ and $\delta_\text{V}$.
    \item Target velocity $\mathbf{v}$: Since $\pmb{\Delta}_\text{V}\propto||\mathbf{v}||^2$, the velocity also contributes to position information, leading to a lower full PEB but also a higher $\delta_\text{P}$, while the simplified PEB is not influenced. At the same time, $\pmb{\Delta}_\text{P}$ increases with $||\mathbf{v}||^2$, although not so significantly as $\pmb{\Delta}_\text{V}$. Consequently, the full VEB is increased with velocity, implying that higher velocity may degrade velocity estimation. The gap between the full and simplified VEBs is also enlarged.
\end{itemize}

\section{Conclusion}\label{sec.con}
This paper provides a comprehensive \acf{CRLB} analysis for cooperative \acf{MIMO} \acf{ISAC} networks. Starting from the signal-level parameters, we show the transformation to the \acf{EFIM} of the state parameters and derive the closed-form expressions for the full \acf{PEB} and \acf{VEB}, taking the mutual influence between position and velocity information into account. Particularly, the CRLBs of various sensing types, including \acf{MMS}, \acf{MBS}, and \acf{MXS}, are discussed respectively. Addressing the intractability and complexity of the full CRLB, we simplify the CRLB by excluding the coupling terms and provide a criterion for the validity of the simplified CRLBs. The theoretical analysis demonstrates that the criterion can be generally satisfied in representative 5G scenarios.
The simulation examines the coverage and accuracy of different cooperative sensing types and the validity of the simplified CRLB in heatmaps. 
The results demonstrate that the simplified CRLB can be safely adopted for the polygon enclosed by the sensing nodes, which is generally the main cooperative sensing area. However, a relatively higher error is observed at the polygon edges and near the sensing nodes. 
The sufficiency and tightness of the proposed criterion are also demonstrated. 
In addition, the impact of system configuration and state parameters on the full CRLB and the validity of the simplified CRLB is evaluated and explained with reference to their formulas.

In summary, while the full CRLB correctly reflects the maximum achievable accuracy of a sensing network and guides geometric deployment, the simplified CRLB provides a tractable sensing performance metric for optimization problems such as radio resource allocation and beamforming, accelerating the integration of sensing functions into future mobile networks. Our future work will leverage the full and simplified CRLBs for beamforming in cooperative ISAC networks and quantify the performance loss of the simplified CRLBs against their gains in computational efficiency, flexibility, and real-time applicability.

\printbibliography

@ARTICLE{10536135,
  author={González-Prelcic, Nuria and Furkan Keskin, Musa and Kaltiokallio, Ossi and Valkama, Mikko and Dardari, Davide and Shen, Xiao and Shen, Yuan and Bayraktar, Murat and Wymeersch, Henk},
  journal={Proceedings of the IEEE}, 
  title={The Integrated Sensing and Communication Revolution for 6G: Vision, Techniques, and Applications}, 
  year={2024},
  volume={112},
  number={7},
  pages={676-723},
  doi={10.1109/JPROC.2024.3397609}}

@misc{persp,
title={German Perspective on 6G - Use Cases, Technical Building Blocks and Requirements}, 
DOI={10.24406/publica-4506}, 
author={Franchi, Norman and Dressler, Falko}, 
year={2024},
month = {Dec},
series = {FAU Studien aus der Elektrotechnik},
  volume = {28},
  publisher = {FAU University Press},
  ISBN = {978-3-96147-797-5},
address={Erlangen, Germany},
booktitle={FAU Studien aus der Elektrotechnik},
}

@ARTICLE{9924202,
  author={Zhou, Wenxing and Zhang, Ruoyu and Chen, Guangyi and Wu, Wen},
  journal={IEEE Open Journal of the Communications Society}, 
  title={Integrated Sensing and Communication Waveform Design: A Survey}, 
  year={2022},
  volume={3},
  number={},
  pages={1930-1949},
  doi={10.1109/OJCOMS.2022.3215683}}

@ARTICLE{9737357,
  author={Liu, Fan and Cui, Yuanhao and Masouros, Christos and Xu, Jie and Han, Tony Xiao and Eldar, Yonina C. and Buzzi, Stefano},
  journal={IEEE Journal on Selected Areas in Communications}, 
  title={Integrated Sensing and Communications: Toward Dual-Functional Wireless Networks for 6G and Beyond}, 
  year={2022},
  volume={40},
  number={6},
  pages={1728-1767},
  doi={10.1109/JSAC.2022.3156632}}

@ARTICLE{9354629,
  author={Wild, Thorsten and Braun, Volker and Viswanathan, Harish},
  journal={IEEE Access}, 
  title={Joint Design of Communication and Sensing for Beyond 5G and 6G Systems}, 
  year={2021},
  volume={9},
  number={},
  pages={30845-30857},
  doi={10.1109/ACCESS.2021.3059488}}

@ARTICLE{9591277,
  author={Nidamanuri, Jaswanth and Nibhanupudi, Chinmayi and Assfalg, Rolf and Venkataraman, Hrishikesh},
  journal={IEEE Transactions on Intelligent Vehicles}, 
  title={A Progressive Review: Emerging Technologies for ADAS Driven Solutions}, 
  year={2022},
  volume={7},
  number={2},
  pages={326-341},
  doi={10.1109/TIV.2021.3122898}}

@INPROCEEDINGS{9272226,
  author={Benhaddou, Driss and Harikumar, Mayuri and Chen, Ji and Han, Zhu and Bhaskar, Vidhyacharan and Abdelhadi, Ahmed},
  booktitle={2020 IEEE International Symposium on Systems Engineering (ISSE)}, 
  title={Coordinated Multipoint User Scheduling for 5G Cloud Radio Access Network}, 
  year={2020},
  location={Vienna, Austria},
  volume={},
  number={},
  pages={1-7},
  doi={10.1109/ISSE49799.2020.9272226}}

@ARTICLE{9762865,
  author={Ranjbar, Vida and Girycki, Adam and Rahman, Md Arifur and Pollin, Sofie and Moonen, Marc and Vinogradov, Evgenii},
  journal={IEEE Communications Standards Magazine}, 
  title={Cell-Free mMIMO Support in the O-RAN Architecture: A PHY Layer Perspective for 5G and Beyond Networks}, 
  year={2022},
  volume={6},
  number={1},
  pages={28-34},
  doi={10.1109/MCOMSTD.0001.2100067}}

@ARTICLE{9585321,
  author={Zhang, J. Andrew and Rahman, Md. Lushanur and Wu, Kai and Huang, Xiaojing and Guo, Y. Jay and Chen, Shanzhi and Yuan, Jinhong},
  journal={IEEE Communications Surveys \& Tutorials}, 
  title={Enabling Joint Communication and Radar Sensing in Mobile Networks—A Survey}, 
  year={2022},
  volume={24},
  number={1},
  pages={306-345},
  doi={10.1109/COMST.2021.3122519}}

@ARTICLE{10599241,
  author={Su, Yanpeng and Lübke, Maximilian and Franchi, Norman},
  journal={IEEE Access}, 
  title={Coordinated Multipoint JCAS in 6G Mobile Networks}, 
  year={2024},
  volume={12},
  number={},
  pages={98530-98545},
  doi={10.1109/ACCESS.2024.3428831}}

@ARTICLE{1420803,
  author={Smith, S.T.},
  journal={IEEE Transactions on Signal Processing}, 
  title={Statistical resolution limits and the complexified Crame/spl acute/r-Rao bound}, 
  year={2005},
  volume={53},
  number={5},
  pages={1597-1609},
  doi={10.1109/TSP.2005.845426}}

@ARTICLE{923295,
  author={Dogandzic, A. and Nehorai, A.},
  journal={IEEE Transactions on Signal Processing}, 
  title={Cramer-Rao bounds for estimating range, velocity, and direction with an active array}, 
  year={2001},
  volume={49},
  number={6},
  pages={1122-1137},
  doi={10.1109/78.923295}}

@ARTICLE{7347470,
  author={Zhao, Tong and Huang, Tianyao},
  journal={IEEE Transactions on Signal Processing}, 
  title={Cramer-Rao Lower Bounds for the Joint Delay-Doppler Estimation of an Extended Target}, 
  year={2016},
  volume={64},
  number={6},
  pages={1562-1573},
  doi={10.1109/TSP.2015.2505681}}

@ARTICLE{9627227,
  author={Giroto de Oliveira, Lucas and Nuss, Benjamin and Alabd, Mohamad Basim and Diewald, Axel and Pauli, Mario and Zwick, Thomas},
  journal={IEEE Transactions on Microwave Theory and Techniques}, 
  title={Joint Radar-Communication Systems: Modulation Schemes and System Design}, 
  year={2022},
  volume={70},
  number={3},
  pages={1521-1551},
  doi={10.1109/TMTT.2021.3126887}}

@INPROCEEDINGS{7131233,
  author={Ai, Yue and Yi, Wei and Blum, Rick S. and Kong, Lingjiang},
  booktitle={2015 IEEE Radar Conference (RadarCon)}, 
  title={Cramer-Rao lower bound for multitarget localization with noncoherent statistical MIMO radar}, 
  year={2015},
  location={Arlington, VA, USA},
  volume={},
  number={},
  pages={1497-1502},
  doi={10.1109/RADAR.2015.7131233}}

@INPROCEEDINGS{10615966,
  author={Zabini, Flavio and Paolini, Enrico and Xu, Wen and Giorgetti, Andrea},
  booktitle={2024 IEEE International Conference on Communications Workshops (ICC Workshops)}, 
  title={Fundamental Limits of Cooperative Strategies in Joint Sensing and Communication Networks}, 
  year={2024},
  location={Denver, CO, USA},
  volume={},
  number={},
  pages={329-334},
  doi={10.1109/ICCWorkshops59551.2024.10615966}}

@INPROCEEDINGS{11202824,
  author={Zhao, Zhixiang and Semper, Sebastian and Schneider, Christian and Thomä, Reiner S.},
  booktitle={2025 28th International Workshop on Smart Antennas (WSA)}, 
  title={Sensing-Aided Beamforming: The Impact of Distributed Sensing Network Geometry}, 
  year={2025},
  location={Erlangen, Germany},
  volume={},
  number={},
  pages={1-7},
  doi={10.1109/WSA65299.2025.11202824}}

@INPROCEEDINGS{10646234,
  author={Mollén, Christopher and Huschke, Jörg and Baldemair, Robert and Fodor, Gabor},
  booktitle={2024 IEEE 4th International Symposium on Joint Communications \& Sensing (JC\&S)}, 
  title={Multistatic Sensing Performance Maps for Evaluating Integrated Sensing and Communication Deployments}, 
  year={2024},
  location={Leuven, Belgium},
  volume={},
  number={},
  pages={1-6},
  doi={10.1109/JCS61227.2024.10646234}}

@INPROCEEDINGS{10419729,
  author={Yao, SiNuo and Zhang, JunTao and Cai, Shu and Zhang, Jun},
  booktitle={2023 IEEE 23rd International Conference on Communication Technology (ICCT)}, 
  title={A Performance Bound for Target Localization in an OFDM-Based Communication Network}, 
  year={2023},
  location={Wuxi, China},
  volume={},
  number={},
  pages={479-483},
  doi={10.1109/ICCT59356.2023.10419729}}

@INPROCEEDINGS{10693955,
  author={Pucci, Lorenzo and Giorgetti, Andrea},
  booktitle={2024 IEEE 25th International Workshop on Signal Processing Advances in Wireless Communications (SPAWC)}, 
  title={Position Error Bound for Cooperative Sensing in MIMO-OFDM Networks}, 
  year={2024},
  location={Lucca, Italy},
  volume={},
  number={},
  pages={296-300},
  doi={10.1109/SPAWC60668.2024.10693955}}

@INPROCEEDINGS{10266619,
  author={Xu, Yiming and Xie, Lei and Xu, Dongfang and Song, Shenghui},
  booktitle={2023 IEEE International Mediterranean Conference on Communications and Networking (MeditCom)}, 
  title={Fundamental Limits and Base Station Selection for Collaborative Sensing in Perceptive Mobile Networks}, 
  year={2023},
  location={Dubrovnik, Croatia},
  volume={},
  number={},
  pages={97-102},
  doi={10.1109/MeditCom58224.2023.10266619}}

@ARTICLE{7362229,
  author={He, Qian and Hu, Jianbin and Blum, Rick S. and Wu, Yonggang},
  journal={IEEE Transactions on Signal Processing}, 
  title={Generalized Cramér–Rao Bound for Joint Estimation of Target Position and Velocity for Active and Passive Radar Networks}, 
  year={2016},
  volume={64},
  number={8},
  pages={2078-2089},
  doi={10.1109/TSP.2015.2510978}}

@ARTICLE{11481148,
  author={Xia, Guoqing and Xiao, Pei and Luo, Qu and Ji, Bing and Zhang, Yue and Zhou, Huiyu},
  journal={IEEE Transactions on Communications}, 
  title={Joint Location and Velocity Estimation and Fundamental CRLB Analysis for Cell-Free MIMO-ISAC}, 
  year={2026},
  volume={74},
  number={},
  pages={7357-7374},
  doi={10.1109/TCOMM.2026.3683820}}

@ARTICLE{11231051,
  author={Arcangeloni, Luca and Testi, Enrico and Pucci, Lorenzo and Giorgetti, Andrea},
  journal={IEEE Open Journal of the Communications Society}, 
  title={Fundamental Limits of Target Parameter Estimation in OFDM-Based 3D NTN ISAC Systems}, 
  year={2025},
  volume={6},
  number={},
  pages={9534-9546},
  doi={10.1109/OJCOMS.2025.3630072}}

@misc{pucci2025,
      title={Position and Velocity Estimation Accuracy in MIMO-OFDM ISAC Networks: A Fisher Information Analysis}, 
      author={Lorenzo Pucci and Luca Arcangeloni and Andrea Giorgetti},
      year={2025},
      eprint={2507.01743},
      archivePrefix={arXiv},
      primaryClass={eess.SP}, 
}

@misc{xu2026,
      title={Hybrid Mono- and Bi-static OFDM-ISAC via BS-UE Cooperation: Closed-Form CRLB and Coverage Analysis}, 
      author={Xiaoli Xu and Yong Zeng},
      year={2026},
      eprint={2601.09057},
      archivePrefix={arXiv},
      primaryClass={cs.IT},
}

@ARTICLE{5393291,
  author={He, Qian and Blum, Rick S. and Godrich, Hana and Haimovich, Alexander M.},
  journal={IEEE Journal of Selected Topics in Signal Processing}, 
  title={Target Velocity Estimation and Antenna Placement for MIMO Radar With Widely Separated Antennas}, 
  year={2010},
  volume={4},
  number={1},
  pages={79-100},
  doi={10.1109/JSTSP.2009.2038974}}

@ARTICLE{8869705,
  author={Saad, Walid and Bennis, Mehdi and Chen, Mingzhe},
  journal={IEEE Network}, 
  title={A Vision of 6G Wireless Systems: Applications, Trends, Technologies, and Open Research Problems}, 
  year={2020},
  volume={34},
  number={3},
  pages={134-142},
  doi={10.1109/MNET.001.1900287}}

@techreport{3gpp38913,
  author      = {{3GPP}},
  title       = {{Study on Scenarios and Requirements for Next Generation Access Technologies}},
  institution = {{3rd Generation Partnership Project (3GPP)}},
  number      = {TR 38.913 V19.0.0},
  type        = {Technical Report},
  year        = {2025},
  note        = {Release 19},
  url         = {https://www.3gpp.org/ftp/Specs/archive/38_series/38.913/38913-j00.zip}
}

@techreport{3gpp.36.819,
 author = {3GPP},
 institution = {{3rd Generation Partnership Project (3GPP)}},
 month = {05},
 note = {Version 1.0.0},
 number = {36.819},
 title = {Coordinated Multi-Point Operation for LTE Physical Layer Aspects},
 type = {Technical Specification (TS)},
 url = {https://portal.3gpp.org/desktopmodules/Specifications/SpecificationDetails.aspx?specificationId=2498},
 year = {2011}
}

@INPROCEEDINGS{9411178,
  author={Thomä, Reiner and Dallmann, Thomas and Jovanoska, Snezhana and Knott, Peter and Schmeink, Anke},
  booktitle={2021 15th European Conference on Antennas and Propagation (EuCAP)}, 
  title={Joint Communication and Radar Sensing: An Overview}, 
  year={2021},
  location={Dusseldorf, Germany},
  volume={},
  number={},
  pages={1-5},
  doi={10.23919/EuCAP51087.2021.9411178}}

@ARTICLE{10577579,
  author={Babu, Nithin and Masouros, Christos and Papadias, Constantinos B. and Eldar, Yonina C.},
  journal={IEEE Transactions on Wireless Communications}, 
  title={Precoding for Multi-Cell ISAC: From Coordinated Beamforming to Coordinated Multipoint and Bi-Static Sensing}, 
  year={2024},
  volume={23},
  number={10},
  pages={14637-14651},
  doi={10.1109/TWC.2024.3417713}}

@ARTICLE{9945983,
  author={Dong, Fuwang and Liu, Fan and Cui, Yuanhao and Wang, Wei and Han, Kaifeng and Wang, Zhiqin},
  journal={IEEE Transactions on Wireless Communications}, 
  title={Sensing as a Service in 6G Perceptive Networks: A Unified Framework for ISAC Resource Allocation}, 
  year={2023},
  volume={22},
  number={5},
  pages={3522-3536},
  doi={10.1109/TWC.2022.3219463}}

@ARTICLE{8356190,
  author={Abu-Shaban, Zohair and Zhou, Xiangyun and Abhayapala, Thushara and Seco-Granados, Gonzalo and Wymeersch, Henk},
  journal={IEEE Transactions on Wireless Communications}, 
  title={Error Bounds for Uplink and Downlink 3D Localization in 5G Millimeter Wave Systems}, 
  year={2018},
  volume={17},
  number={8},
  pages={4939-4954},
  doi={10.1109/TWC.2018.2832134}}

@ARTICLE{4408448,
  author={Haimovich, Alexander M. and Blum, Rick S. and Cimini, Leonard J.},
  journal={IEEE Signal Processing Magazine}, 
  title={MIMO Radar with Widely Separated Antennas}, 
  year={2008},
  volume={25},
  number={1},
  pages={116-129},
  doi={10.1109/MSP.2008.4408448}}

@ARTICLE{11570946,
  author={Su, Yanpeng and Franchi, Norman and Lübke, Maximilian},
  journal={IEEE Transactions on Radar Systems}, 
  title={On Unified CRLB Framework from Generic Signals to ISAC Waveforms with Virtual Array Sensing}, 
  year={2026},
  volume={},
  number={},
  pages={1-1},
  doi={10.1109/TRS.2026.3705154}}

\end{document}